\newtheorem{theorem}{Theorem}
\newtheorem{lemma}{Lemma}
\newtheorem{remark}{Remark}
\newtheorem{example}{Example}
\newtheorem{proposition}{Proposition}
\newtheorem{corollary}{Corollary}
\begin{document}

\begin{frontmatter}
\title{Virtual Nonholonomic Constraints: A Geometric Approach}

\author[AA]{Alexandre Anahory Simoes},  \ead{alexandre.anahory@car.upm-csic.es}
\author[LC]{Efstratios Stratoglou}, \ead{ef.stratoglou@alumnos.upm.es} \author[AB]{Anthony Bloch}, \ead{abloch@umich.edu} 
\author[AA]{Leonardo J. Colombo} \ead{leonardo.colombo@car.upm-csic.es}

\address[AA]{Centre for Automation and Robotics (CSIC-UPM), Ctra. M300 Campo Real, Km 0,200, Arganda
del Rey - 28500 Madrid, Spain.}
\address[AB] {Department of Mathematics, University of Michigan, Ann Arbor, MI 48109, USA.}
\address[LC]{Universidad Polit\'ecnica de Madrid (UPM), José Gutiérrez Abascal, 2, 28006 Madrid, Spain.}

\begin{keyword}  
Virtual constraints, Nonholonomic systems, Affine connection control systems, Underactuated mechanical systems.
\end{keyword}

\begin{abstract}
Virtual constraints are invariant relations imposed on a control system via feedback as opposed to real physical constraints acting on the system.  Nonholonomic systems are mechanical systems with non-integrable constraints on the velocities. In this work, we introduce the notion of \textit{virtual nonholonomic constraints} in a geometric framework. More precisely, it is a controlled invariant distribution associated with an affine connection mechanical control system. We demonstrate the existence and uniqueness of a control law defining a virtual nonholonomic constraint and we characterize the trajectories of the closed-loop system as solutions of a mechanical system associated with an induced constrained connection. Moreover, we characterize the dynamics for nonholonomic systems in terms of virtual nonholonomic constraints, i.e., we characterize when can we obtain nonholonomic dynamics from virtual nonholonomic constraints.
\end{abstract} 

\end{frontmatter}

\section{Introduction}
Virtual constraints are relations on the configuration variables of a control system which are imposed through feedback control and the action of actuators, instead of through physical connections such as gears or contact conditions with the environment. The class of virtual holonomic constraints became popular in applications to biped locomotion where it was used to express a desired walking gait (see for instance  \cite{chevallereau2009asymptotically,la2013stable,razavi2016symmetric,chevallereau2018self}), as well as for motion planning to search for periodic orbits and its employment in the technique of transverse linearization to stabilize such orbits \cite{freidovich2008periodic,westerberg2009motion,shiriaev2010transverse,mohammadi2018dynamic,nielsen2008local,consolini2010path,consolini2013control}.

Virtual nonholonomic constraints are a class of virtual constraints that depend on velocities rather than only on the configurations of the system. Such constraints were introduced in \cite{griffin2015nonholonomic}, \cite{griffin2017nonholonomic} to design a velocity-based swing foot placement in bipedal robots. In particular, this classes of virtual constraints has been used in \cite{horn2018hybrid,hamed2019nonholonomic,horn2020nonholonomic,horn2021nonholonomic} to encode velocity-dependent stable walking gaits via momenta conjugate to the unactuated degrees of freedom of legged robots and prosthetic legs.  

From a theoretical perspective, virtual constraints extend the application of zero dynamics to feedback design (see for instance \cite{isidori} and \cite{westervelt2018feedback}). In particular, the class of virtual holonomic constraints applied to mechanical systems has built rich theoretical foundations and applications in the last decade (see \cite{maggiore2012virtual,mohammadi2013lagrangian,mohammadi2015maneuvering,mohammadi2017lagrangian,mohammadi2018dynamic,vcelikovsky2015flatness,vcelikovsky2016collocated,vcelikovsky2017collocated,consolini2015induced,consolini2018coordinate}), nevertheless there is a lack of a rigorous definition and qualitative description for the class of virtual nonholonomic constraints in contrast with the holonomic situation. The recent work \cite{moran2021energy} shows a first approach to define rigorously virtual nonholonomic constraints, but the nonlinear nature of the constraints makes difficult a thorough mathematical analysis. In this work, we provide a formal definition of linear virtual nonholonomic constraints, i.e., constraints that are linear on the velocities. This particular case includes most of the examples of nonholonomic constraints in the literature of nonholonomic systems (see \cite{bloch2003nonholonomic} and \cite{ne_mark2004dynamics} for instance). Our definition is based on the invariance property under the closed-loop system and coincides with the one of \cite{moran2021energy}, in the linear case. 

In particular, a virtual nonholonomic constraint is described by a non-integrable distribution on the configuration manifold of the system for which there is a feedback control making it invariant under the flow of the closed-loop system. We provide sufficient conditions for the existence and uniqueness of such a feedback law defining the virtual nonholonomic constraint and we also characterize the trajectories of the closed-loop system as solutions of a mechanical system associated with an induced constrained connection. Moreover, we are able to produce nonholonomic dynamics by imposing virtual nonholonomic constraints on a mechanical control system. This last result allows one to control the system to satisfy desired stability properties that are well known in the literature on nonholonomic systems, through the imposition of suitable virtual nonholonomic constraints.

The remainder of the paper is structured as follows. Section \ref{sec2} introduces nonholonomic systems. We define virtual nonholonomic constraints in Section \ref{sec:controler}, where we provide sufficient conditions for the existence and uniqueness of a control law defining a virtual nonholonomic constraint, and provide examples and comparisons with the literature.  In Section \ref{sec4}, we introduce a constrained connection to characterize the closed-loop dynamics as a solution of the mechanical system associated with such a constrained connection. In Section \ref{sec5}, we show that if the input distribution is orthogonal to the virtual nonholonomic constraint distribution then the constrained dynamics is precisely the nonholonomic dynamics with respect to the original Lagrangian function. Conclusions are given in Section \ref{sec:conclusions}.

\section{Nonholonomic mechanical systems}\label{sec2}
Let $Q$ be the configuration space of a mechanical system, a differentiable manifold with $\dim(Q)=n$, and with local coordinates denoted by $(q^i)$ for $i=1,\ldots,n$. Most nonholonomic systems have linear constraints on velocities, and these are the ones we will consider. Linear constraints on the
velocities (or Pfaffian constraints) are locally given by equations of the form \begin{equation}\label{NHconstraint}\phi(q^i, \dot{q}^i)=\mu_i(q)\dot{q}^i=0,\end{equation} depending in general, on the configurations and
velocities of the system (see \cite{bloch2003nonholonomic} for instance). 

From a geometric point of view, these constraints are defined by a regular distribution ${\mathcal D}$ on
$Q$ of constant rank $(n-m)$ such that the annihilator of ${\mathcal
D}$, denoted by $\mathcal {D}^o$, is locally given at each point of $Q$ by
${\mathcal D}^o_{q} = \operatorname{span}\left\{ \mu^{a}(q)=\mu_i^{a}dq^i \; ; 1 \leq a
\leq m \right\}$, where $\mu^{a}$ are linearly independent differential one-forms at each point of $Q$. We further denote by $\Omega^{1}(Q)$ the set of differential one-forms on $Q$.

Next, consider  mechanical systems where the Lagrangian is of mechanical type, that is, mechanical systems with a dynamics described by a Lagrangian function $L:TQ\rightarrow\mathbb{R}$ which is defined by
\begin{equation}\label{mechanical:lagrangian}
    L(v_q)=\frac{1}{2}\mathcal{G}(v_q, v_q) - V(q),
\end{equation}
with $v_q\in T_qQ$, where $\mathcal{G}$ denotes a Riemannian metric on $Q$ representing the kinetic energy of the systems, $ T_qQ$, the tangent space at the point $q$ of $Q$, and
$V:Q\rightarrow\mathbb{R}$ is a (smooth) potential function, and also assume the Lagrangian system is subject to the nonholonomic constraints  given by \eqref{NHconstraint}.

\begin{defn}\label{nonholonomicsystem}
A \textit{nonholonomic mechanical system} on a smooth manifold $Q$ is given
by the triple $(\mathcal{G}, V, \mathcal{D})$, where $\mathcal{G}$ is
a Riemannian metric on $Q,$ representing the kinetic energy of the
system, $V:Q\rightarrow\mathbb{R}$ is a smooth function representing the potential
energy, and $\mathcal{D}$ a regular distribution on $Q$
describing the nonholonomic constraints.
\end{defn}

Denote by $\tau_{\mathcal{D}}:\mathcal{D}\rightarrow Q$ the canonical
projection from $\mathcal{D}$ to $Q$, locally given by $\tau_{\mathcal{D}}(q^i, \dot{q}^i)=q^i$, and denote by 
$\Gamma(\tau_{\mathcal{D}})$ the set of sections of $\tau_{D}$, that is, $Z\in\Gamma(\tau_{\mathcal{D}})$ if $Z:Q\to\mathcal{D}$ satisfies $(\tau_{\mathcal{D}}\circ Z)(q)=q$. We also denote by $\mathfrak{X}(Q)$ the set of vector fields on $Q$. If $X, Y\in\mathfrak{X}(Q),$ then
$[X,Y]$ denotes the standard Lie bracket of vector fields.

In any Riemannian manifold, there is a unique connection $\nabla^{\mathcal{G}}:\mathfrak{X}(Q)\times \mathfrak{X}(Q) \rightarrow \mathfrak{X}(Q)$ called the \textit{Levi-Civita connection} satisfying the following two properties:
\begin{enumerate}
\item $[ X,Y]=\nabla_{X}^{\mathcal{G}}Y-\nabla_{Y}^{\mathcal{G}}X$ (symmetry)
\item $X(\mathcal{G}(Y,Z))=\mathcal{G}(\nabla_{X}^{\mathcal{G}}(Y,Z))+\mathcal{G}(Y,\nabla_{X}^{\mathcal{G}}Z)$ (compatibility of the metric).
\end{enumerate}
The trajectories $q:I\rightarrow Q$ of a mechanical Lagrangian determined by a Lagrangian function as in \eqref{mechanical:lagrangian} satisfy the following equation
\begin{equation}\label{ELeq}
    \nabla_{\dot{q}}^{\mathcal{G}}\dot{q} + \text{grad}_{\mathcal{G}}V(q(t)) = 0.
\end{equation}
Observe that if the potential function vanishes, then the trajectories of the mechanical system are just the geodesics with respect to the connection $\nabla^{\mathcal{G}}$. Here, the vector field $\text{grad}_{\mathcal{G}}V\in\mathfrak{X}(Q)$ is characterized by $$\mathcal{G}(\text{grad}_{\mathcal{G}}V, X) = dV(X), \; \mbox{ for  every } X \in
\mathfrak{X}(Q).$$

Using the Riemannian metric $\mathcal{G}$ we can define two
complementary orthogonal projectors ${\mathcal P}\colon TQ\to {\mathcal D}$ and ${\mathcal Q}\colon TQ\to {\mathcal
D}^{\perp},$ with respect to the tangent bundle orthogonal decomposition $\mathcal{D}\oplus\mathcal{D}^{\perp}=TQ$.

In the presence of a constraint distribution $\mathcal{D}$, equation \eqref{ELeq} must be slightly modified as follows. Consider the \textit{nonholonomic connection} $\nabla^{nh}:\mathfrak{X}(Q)\times \mathfrak{X}(Q) \rightarrow \mathfrak{X}(Q)$ defined by (see \cite{bullo} for instance)
\begin{equation}\label{nh:connection}
    \nabla^{nh}_X Y =\nabla_{X}^{\mathcal{G}} Y + (\nabla_{X}^{\mathcal{G}} \mathcal{Q})(Y).
\end{equation}
Then, the trajectories for the nonholonomic mechanical system associated with the Lagrangian \eqref{mechanical:lagrangian} and the distribution $\mathcal{D}$ must satisfy the following equation
\begin{equation}\label{nonholonomic:mechanical:equation}
    \nabla^{nh}_{\dot{q}}\dot{q} + \mathcal{P}(\text{grad}_{\mathcal{G}}V(q(t))) = 0.
\end{equation}

\section{Virtual nonholonomic constraints}\label{sec:controler}
Next, we present the rigorous construction of virtual nonholonomic  constraints. In contrast to the case of standard nonholonomic constraints of the form \eqref{NHconstraint}, the concept of virtual constraint is always associated with a controlled system, rather than with the distribution defined by the constraints. 

Given the Riemannian metric $\mathcal{G}$ on $Q$, we can use its non-degeneracy property to define the musical isomoprhism $\flat:\mathfrak{X}(Q)\rightarrow \Omega^{1}(Q)$ defined by $\flat(X)(Y)=\mathcal{G}(X,Y)$ for any $X, Y \in \mathfrak{X}(Q)$. Also, denote by $\sharp:\Omega^{1}(Q)\rightarrow \mathfrak{X}(Q)$ the inverse musical isomorphism, i.e., $\sharp=\flat^{-1}$.

Given an external force $F^{0}:TQ\rightarrow T^{*}Q$ and a control force $F:TQ\times U \rightarrow T^{*}Q$ of the form
\begin{equation}
    F(q,\dot{q},u) = \sum_{a=1}^{m} u_{a}f^{a}(q)
\end{equation}
where $f^{a}\in \Omega^{1}(Q)$ with $m<n$, $U\subset\mathbb{R}^{m}$ the set of controls and $u_a\in\mathbb{R}$ with $1\leq a\leq m$ the control inputs, consider the associated mechanical control system of the form
\begin{equation}\label{mechanical:control:system}
    \nabla^{\mathcal{G}}_{\dot{q}(t)} \dot{q}(t) =Y^{0}(q(t),\dot{q}(t)) + u_{a}(t)Y^{a}(q(t)),
\end{equation}
with $Y^{0}=\sharp(F^{0})$ and $Y^{a}=\sharp(f^{a})$ the corresponding force vector fields.

Hence, $q$ is the trajectory of a vector field of the form
\begin{equation}\label{SODE}\Gamma(v_{q})=G(v_{q})+u_{a}(Y^{a})_{v_{q}}^{V},\end{equation}
where $G$ is the vector field determined by the unactuated forced mechanical system
\begin{equation*}
    \nabla^{\mathcal{G}}_{\dot{q}(t)} \dot{q}(t) =Y^{0}(q(t),\dot{q}(t))
\end{equation*}
and where the vertical lift of a vector field $X\in \mathfrak{X}(Q)$ to $TQ$ is defined by $$X_{v_{q}}^{V}=\left. \frac{d}{dt}\right|_{t=0} (v_{q} + t X(q)).$$

\begin{defn}
    The distribution $\mathcal{F}\subseteq TQ$ generated by the vector fields  $\sharp(f_{i})$ is called the \textit{input distribution} associated with the mechanical control system \eqref{mechanical:control:system}.
\end{defn}

Now we will define the concept of virtual nonholonomic constraint.

\begin{defn}
A \textit{virtual nonholonomic constraint} associated with the mechanical control system \eqref{mechanical:control:system} is a controlled invariant distribution $\mathcal{D}\subseteq TQ$ for that system, that is, there exists a control function $\hat{u}:\mathcal{D}\rightarrow \mathbb{R}^{m}$ such that the solution of the closed-loop system satisfies $\phi_{t}(\mathcal{D})\subseteq \mathcal{D}$, where $\phi_{t}:TQ\rightarrow TQ$ denotes its flow.
\end{defn}

\begin{rem}A particular example of mechanical control system appearing in applications is determined by a mechanical Lagrangian function $L:TQ\rightarrow \mathbb{R}$. In this case, the control system is given by the controlled Euler-Lagrange equations, i.e., 
\begin{equation}\label{euler:lagrange:system}
    \frac{d}{dt}\left(\frac{\partial L}{\partial \dot{q}}\right)-\frac{\partial L}{\partial q}=F(q,\dot{q},u).
\end{equation}

If the curve $q:I\rightarrow Q$ is a solution of the controlled Euler Lagrange equations \eqref{euler:lagrange:system}, it may be shown that it satisfies the mechanical equation (see \cite{bullo} for instance)
\begin{equation}\label{lagrangian:control:system}
    \nabla^{\mathcal{G}}_{\dot{q}(t)} \dot{q}(t) + \text{grad}_{\mathcal{G}}V(q(t))=u_{a}(t)Y^{a}(q(t)).
\end{equation}
These are the equations of a mechanical control system as in \eqref{mechanical:control:system}, where the force field $Y^{0}$ is simply given by $-\text{grad}_{\mathcal{G}}V(q(t))$. In this case, we call \eqref{lagrangian:control:system} a controlled Lagrangian system.\hfill$\diamond$\end{rem}

\subsection{Relation with previous definitions of virtual nonholonomic constraints}

In previous works, virtual nonholonomic constraints appeared under different definitions. The most general one, comprising every other one as a particular case, is given in \cite{moran2021energy} where a virtual nonholonomic constraint is a set of the form
$$\mathcal{M}=\{(q,p)\in Q\times \mathbb{R}^{n} \ | \ h(q,p)=0\},$$
for which there exists a control law making it invariant under the flow of the closed-loop controlled Hamiltonian equations. This constraint might be rewritten using the cotangent bundle $T^{*}Q$ and $h$ might be seen as a function $h:T^{*}Q\rightarrow \mathbb{R}^{m}$. In addition, $h$ should satisfy $\text{rank } dh(q,p) = m$ for all $(q,p)\in \mathcal{M}$.

Our definition falls under this general definition, for the particular case where the function $h$ is linear on the fibers, i.e., a linear function on the momenta $p_{i}$. In order to see this, we must rewrite the virtual nonholonomic constraints and the control system on the cotangent bundle.

Indeed, consider the Hamiltonian function $H:T^{*}Q \rightarrow \mathbb{R}$ obtained from a Lagrangian function in the following way
$$H(q,p)=p\dot{q}(q,p)-L(q,\dot{q}(q,p)),$$
where $\dot{q}(q,p)$ is a function of $(q,p)$ given by the inverse of the Legendre transformation
$$p=\frac{\partial L}{\partial \dot{q}}.$$
The controlled Hamiltonian equations are given by
$$\dot{q}=\frac{\partial H}{\partial p}, \quad \dot{p}=-\frac{\partial H}{\partial q} + F^{0}(q,\dot{q}(q,p)) + u_{a}f^{a}(q),$$
where $F^{0}$ is an external force map. Now, any distribution $\mathcal{D}\subseteq TQ$ might be defined as the set
$$\mathcal{D}= \{ (q,\dot{q})\in TQ \ | \ \mu^{a}(q)(\dot{q}) = 0\},$$
where $\mu^{a}$ with $1 \leqslant a\leqslant m$ are $m$ linearly independent one-forms. The cotangent version of the distribution is the set
$$\tilde{\mathcal{M}}= \{ (q,p) \ | \ \mu^{a}(q)(\dot{q}(q,p)) = 0 \}. $$
Therefore, we set $$h(q,p)=(\mu^{1}(q)(\dot{q}(q,p)),\cdots, \mu^{m}(q)(\dot{q}(q,p))).$$ We just have to check if $\text{rank } dh = m$. Note that each component of $h$ is linear on fibers if the Lagrangian function (and thus, the corresponding Hamiltonian function) is of mechanical type, i.e., $L=\frac{1}{2}\dot{q}^{T}M\dot{q} - V(q)$, where $M$ is the mass matrix and it represents the Riemanian metric on coordinates, then the Legendre transform is just
$p=M\dot{q}$ and its inverse is $\dot{q}=M^{-1}p$. Therefore,
$$h(q,p)=(\mu^{1}M^{-1}p, \cdots, \mu^{m}M^{-1}p).$$
Hence, the submatrix of the Jacobian formed by the partial derivatives with respect to the momenta $p$ are formed by the rows
$$M^{-1}\mu^{1}, \cdots M^{-1}\mu^{m},$$
which are linearly independent. Thus this submatrix has rank $m$ and this implies that the Jacobian matrix $dh$ has rank greater than $m$. However, since it is formed by $m$ rows, the rank of $dh$ must be exactly $m$ and $\tilde{\mathcal{M}}$ is a virtual nonholonomic constraint according to \cite{moran2021energy} if there is a control law making it invariant.

In summary, in the case that the mechanical control system is described by a mechanical Lagrangian function, our definition of virtual nonholonomic constraint coincides with the one given in \cite{moran2021energy} when we view it in the cotangent bundle. However, their definition is more general than ours since it also comprises nonlinear constraints.

\begin{remark}
    The requirement that the mechanical control system comes from a mechanical Lagrangian is not necessary in order to have equivalence of both definitions but it is at least necessary that we have some way of pushing forward the constraints to the cotangent bundle. This property is usually the regularity of the Lagrangian function, which amounts to have the Legendre transformation as a local diffeomorphism between $TQ$ and $T^{*}Q$.
\end{remark}

\subsection{Examples}

\begin{example}\label{se2:example}
    Consider in $SE(2)\cong \mathbb{R}^{2}\times \mathbb{S}^{1}$ the mechanical Lagrangian function
$$L(x,y,\theta,\dot{x},\dot{y},\dot{\theta})=\frac{m}{2}(\dot{x}^{2}+\dot{y}^{2})+\frac{I\dot{\theta}^{2}}{2}$$
together with the control force
$$F(x,y,\theta,\dot{x},\dot{y},\dot{\theta},u)=u(\sin \theta dx-\cos \theta dy +d\theta).$$
The corresponding controlled Lagrangian system is
\begin{equation*}
    m\ddot{x}=u \sin\theta, \quad m\ddot{y}=-u \cos\theta, \quad I\ddot{\theta}=u
\end{equation*}
and, as we will show, it has the following virtual nonholonomic constraint
$$\sin\theta \dot{x} - \cos\theta \dot{y}=0.$$
The input distribution $\mathcal{F}$ is generated just by one vector field $$Y=\frac{\sin \theta}{m}\frac{\partial}{\partial x}-\frac{\cos \theta}{m}\frac{\partial}{\partial y}+\frac{1}{I}\frac{\partial}{\partial \theta},$$
while the virtual nonholonomic constraint is the distribution $\mathcal{D}$ defined as the set of tangent vectors $v_{q}\in T_{q}Q$  where $\mu(q)(v)=0,$ with $\mu=\sin\theta dx - cos\theta dy$. Thus, we may write it as
$$\mathcal{D}=\hbox{span}\Big{\{} X_{1}=\cos \theta\frac{\partial}{\partial x} + \sin\theta \frac{\partial}{\partial y},\, X_{2}=\frac{\partial}{\partial \theta}\Big{ \}}.$$
We may check that $\mathcal{D}$ is controlled invariant for the controlled Lagrangian system above. In fact, the control law
$$\hat{u}(x,y,\theta,\dot{x},\dot{y},\dot{\theta})=-m\dot{\theta}(\cos\theta \dot{x} +\sin \theta \dot{y})$$
makes the distribution invariant under the closed-loop system, since in this case, the dynamical vector field arising from the controlled Euler-Lagrange equations given by
$$\Gamma = \dot{x}\frac{\partial}{\partial x} + \dot{y}\frac{\partial}{\partial y} + \dot{\theta}\frac{\partial}{\partial \theta} + \frac{\hat{u}\sin \theta}{m}\frac{\partial}{\partial \dot{x}} - \frac{\hat{u}\cos \theta}{m}\frac{\partial}{\partial \dot{y}} + \frac{\hat{u}}{I}\frac{\partial}{\partial \dot{\theta}}$$
is tangent to $\mathcal{D}$. This is deduced from the fact that $\Gamma(\sin\theta \dot{x} - \cos\theta \dot{y})=0$.\hfill$\diamond$
\end{example}

\begin{example}\label{disk:example}
    Consider in $\mathbb{R}^{2}\times \mathbb{S}^{1} \times \mathbb{S}^{1}$ the mechanical Lagrangian function
$$L(x,y,\theta,\varphi,\dot{x},\dot{y},\dot{\theta},\dot{\varphi})=\frac{m}{2}(\dot{x}^{2}+\dot{y}^{2})+\frac{I\dot{\theta}^{2}}{2} + \frac{J\dot{\varphi}^{2}}{2}$$
together with the control force \begin{align*}F(x,y,\theta,\varphi,\dot{x},\dot{y},\dot{\theta},\dot{\varphi},u)=&u_{1}(dx-\cos \varphi d\theta + d\varphi)\\ &+ u_{2}(dy-\sin \varphi d\theta + d\varphi).\end{align*}
The controlled Lagrangian system is then
\begin{equation*}
    m\ddot{x}=u_{1},\, m\ddot{y}=u_{2},\, I\ddot{\theta}=-u_{1}\cos \varphi - u_{2} \sin \varphi,\, J\ddot{\varphi}=u_{1} + u_{2}.
\end{equation*}
The virtual nonholonomic constraints associated to this system are defined by the following equations
\begin{equation*}
    \dot{x}=\dot{\theta}\cos \varphi, \quad \dot{y} = \dot{\theta}\sin\varphi.
\end{equation*}
Therefore, the input distribution $\mathcal{F}$ is the set
\begin{equation*}
    \begin{split}
        \mathcal{F}=\hbox{span}\Big{\{} Y^{1} = & \frac{1}{m}\frac{\partial}{\partial x}-\frac{\cos \varphi}{I}\frac{\partial}{\partial \theta}+\frac{1}{J}\frac{\partial}{\partial \varphi}, \\
        & Y^{2}= \frac{1}{m}\frac{\partial}{\partial y}-\frac{\sin \varphi}{I}\frac{\partial}{\partial \theta}+\frac{1}{J}\frac{\partial}{\partial \varphi} \Big{\}}, 
    \end{split}
\end{equation*} and the constraint distribution $\mathcal{D}$ is defined by the 1-forms $\mu^{1}=dx - \cos \varphi d\theta$ and $\mu^{2}=dy - \sin \varphi d\theta$, thus
$$\mathcal{D}=\Big{\{} X_{1}=\cos \varphi\frac{\partial}{\partial x} + \sin\varphi \frac{\partial}{\partial y} + \frac{\partial}{\partial \theta},\, X_{2}= \frac{\partial}{\partial \varphi}\Big{\}}.$$
We may verify, using a similar argument as Example \ref{se2:example}, that $\mathcal{D}$ is in fact controlled invariant under the control law
\begin{equation*}
        \hat{u}_{1}= -m\dot{\theta}\dot{\varphi}\sin \varphi, \quad
        \hat{u}_{2}=  m\dot{\theta}\dot{\varphi} \cos \varphi.
\end{equation*}\hfill$\diamond$
\end{example}

\begin{example}
   Let us look at an example of a mechanical control system which is not a Lagrangian system. Consider again the mechanical control system proposed in Example \ref{se2:example} but now with an additional damping term determined by the vector fiel
   $Y^{0}= -\frac{\gamma}{m}(\dot{x}dx + \dot{y} dy)$,
   where $\gamma>0$ is a damping constant. The mechanical control system has the following equations of motion
   \begin{equation*}
    m\ddot{x}=u \sin\theta - \gamma \dot{x}, \quad m\ddot{y}=-u \cos\theta - \gamma \dot{y}, \quad I\ddot{\theta}=u.
    \end{equation*}
    It is not difficult to check that the control law
    $$\hat{u}(x,y,\theta,\dot{x},\dot{y},\dot{\theta})=-m\dot{\theta}(\cos\theta \dot{x} +\sin \theta \dot{y})$$
    still makes the distribution invariant under the flow of the closed-loop system.
   \hfill$\diamond$ 
\end{example}

\subsection{Existence and uniqueness of a feedback control making the constraints invariant}

It is often very useful if we have conditions under which we are guaranteed that a distribution $\mathcal{D}$ is controlled invariant for the controlled Lagrangian system \eqref{lagrangian:control:system}. The next result not only states the existence of a control function making $\mathcal{D}$ invariant, but it also states that it is unique. In the following, two distributions $\mathcal{A}_{1}$ and $\mathcal{A}_{2}$ on the manifold $Q$ are said to be transversal if they are complementary, in the sense that $TQ=\mathcal{A}_{1}\oplus\mathcal{A}_{2}$.

\begin{theorem}\label{main:theorem}
If the distribution $\mathcal{D}$ and the control input distribution $\mathcal{F}$ are transversal, then there exists a unique control function making the distribution a virtual nonholonomic constraint associated with the mechanical control system \eqref{mechanical:control:system}.
\end{theorem}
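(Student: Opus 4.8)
The plan is to recast controlled invariance as a tangency condition and then reduce existence and uniqueness of $\hat u$ to the invertibility of an $m\times m$ matrix that is exactly where the transversality hypothesis enters. I would view $\mathcal{D}$ as an embedded submanifold of $TQ$, locally cut out by the $m$ fiberwise-linear functions $\phi^{a}(v_{q})=\mu^{a}(q)(v_{q})=\mu^{a}_{i}(q)\dot q^{i}$, whose differentials are pointwise independent along $\mathcal{D}$ because the one-forms $\mu^{a}$ are. Since $\phi_{t}$ is the flow of the closed-loop vector field $\Gamma$ in \eqref{SODE}, the invariance $\phi_{t}(\mathcal{D})\subseteq\mathcal{D}$ for trajectories started on $\mathcal{D}$ holds precisely when $\Gamma$ is tangent to $\mathcal{D}$ along $\mathcal{D}$, i.e. when $\Gamma(\phi^{a})\big|_{\mathcal{D}}=0$ for every $a$. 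This is the system of equations I would impose and then solve for the control $u$.

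First I would compute $\Gamma(\phi^{a})$ using the splitting $\Gamma(v_{q})=G(v_{q})+u_{a}(Y^{a})^{V}_{v_{q}}$. Because $\phi^{a}$ is linear on the fibers and the vertical lift differentiates only along the fiber directions, the control part contributes $(Y^{b})^{V}(\phi^{a})=\mu^{a}(Y^{b})=:C^{ab}$, a function on $Q$ that is independent of both the velocities and of $u$. The drift part contributes a function $R^{a}:=G(\phi^{a})$, namely the usual expression quadratic in the velocities (arising from $\dot q^{j}\dot q^{i}\,\partial_{j}\mu^{a}_{i}$ together with the Christoffel term $-\mu^{a}_{i}\Gamma^{i}_{jk}\dot q^{j}\dot q^{k}$) plus the external-force contribution $\mu^{a}(Y^{0})$. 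The tangency condition thus becomes the affine linear system $C^{ab}u_{b}+R^{a}=0$ on $\mathcal{D}$, where $R^{a}$ is evaluated on constrained velocities $\dot q\in\mathcal{D}_{q}$.

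The crux, and the only place the hypothesis is used, is the linear-algebra statement that $C=(\mu^{a}(Y^{b}))$ is invertible at each point if and only if $\mathcal{D}$ and $\mathcal{F}$ are transversal. Here $\mathcal{D}_{q}=\bigcap_{a}\ker\mu^{a}(q)$ and $\mathcal{F}_{q}=\operatorname{span}\{Y^{b}(q)\}$, with $\dim\mathcal{D}_{q}+\dim\mathcal{F}_{q}=(n-m)+m=n$, so transversality is equivalent to $\mathcal{D}_{q}\cap\mathcal{F}_{q}=\{0\}$. A nonzero kernel vector $c$ of $C$ gives $c_{b}Y^{b}\in\bigcap_{a}\ker\mu^{a}=\mathcal{D}_{q}$, a nonzero element that also lies in $\mathcal{F}_{q}$, contradicting transversality; conversely, a nonzero $v\in\mathcal{D}_{q}\cap\mathcal{F}_{q}$ written as $v=c_{b}Y^{b}$ produces a kernel vector of $C$. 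Hence under the hypothesis $C$ is invertible at every point of $Q$.

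Finally, with $C$ invertible the system solves uniquely as $\hat u=-C^{-1}R$, which is smooth because $C^{-1}$ and $R$ depend smoothly on $(q,\dot q)$, and which is manifestly a function on $\mathcal{D}$ since $R^{a}$ is evaluated along constrained velocities. Uniqueness is then immediate: any control rendering $\mathcal{D}$ invariant must satisfy the same tangency equations $Cu+R=0$, whose only solution is $\hat u$. The hard part will be cleanly establishing the equivalence between transversality and the non-degeneracy of $C$; once that is in place the rest is the tangency computation and a single pointwise matrix inversion.
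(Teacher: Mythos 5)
Your proposal is correct and takes essentially the same route as the paper: both recast controlled invariance as tangency of the closed-loop field $\Gamma$ to $\mathcal{D}$, reduce it to the $m\times m$ linear system whose matrix has entries $\mu^{b}(Y^{a})$ (your $C$), and prove invertibility by observing that a kernel vector would yield a nonzero element of $\mathcal{D}_q\cap\mathcal{F}_q$, contradicting transversality (with linear independence of the $Y^{a}$, which the paper invokes explicitly, finishing the argument). The only cosmetic difference is that the paper first phrases the condition through the pointwise splitting $T_{v_{q}}(TQ)=T_{v_{q}}\mathcal{D}\oplus\operatorname{span}\{(Y^{a})^{V}_{v_{q}}\}$ before writing the identical equations $\tau_{a}^{*}(v_{q})\,\mu^{b}(Y^{a})=-d\phi^{b}(\Gamma(v_{q}))$, whereas you go directly to $\Gamma(\phi^{a})\big|_{\mathcal{D}}=0$.
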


\begin{proof}
 
    Suppose that $TQ=\mathcal{D}\oplus \mathcal{F}$ and that  trajectories of the contol system \eqref{mechanical:control:system} may be written as the integral curves of the vector field $\Gamma$ defined by \eqref{SODE}. For each $v_{q}\in \mathcal{D}_{q}$, we have that $$\Gamma(v_{q})\in T_{v_{q}}(TQ)=T_{v_{q}}\mathcal{D}\oplus \hbox{span}\Big{\{}(Y^{a})_{v_{q}}^{V} \Big{\}},$$ with $Y^{a}=\sharp(f^{a})$. Using the uniqueness decomposition property arising from transversality, we conclude there exists a unique vector $\tau^{*}(v_{q})=(\tau_{1}^{*}(v_{q}),\cdots, \tau_{m}^{*}(v_{q}))\in \mathbb{R}^{m}$ such that $$\Gamma(v_{q})+\tau_{a}^{*}(v_{q})(Y^{a})_{v_{q}}^{V}\in T_{v_{q}}\mathcal{D}.$$ If $\mathcal{D}$ is defined by $m$ constraints of the form $\phi^{b}(v_{q})=0$, $1\leq b\leq m$, then the condition above may be rewritten as $$d\phi^{b}(\Gamma(v_{q})+\tau_{a}^{*}(v_{q})(Y^{a})_{v_{q}}^{V})=0,$$ which is equivalent to $$\tau_{a}^{*}(v_{q})d\phi^{b}((Y^{a})_{v_{q}}^{V})=-d\phi^{b}(\Gamma(v_{q})).$$ Note that, the equation above is a linear equation of the form $A(v_{q})\tau=b(v_{q})$, where $b(v_{q})$ is the vector $(-d\phi^{1}(\Gamma(v_{q})), \dots, -d\phi^{m}(\Gamma(v_{q})))\in \mathbb{R}^{m}$ and $A(v_{q})$ is the $m\times m$ matrix with entries $A^{b}_{a}(v_{q})=d\phi^{b}((Y^{a})_{v_{q}}^{V})=\mu^{b}(q)(Y^{a})$, where the last equality may be deduced by computing the expressions in local coordinates. That is, if $(q^{i} \dot{q}^{i})$ are natural bundle coordinates for the tangent bundle, then
    \begin{equation*}
        \begin{split}
            d\phi^{b}((Y^{a})_{v_{q}}^{V}) & = \left(\frac{\partial \mu^{b}_{i}}{\partial q^{j}}\dot{q}^{i}dq^{j} + \mu^{b}_{i}d\dot{q}^{i}\right)\left(Y^{a,k}\frac{\partial}{\partial \dot{q}^{k}}\right) \\
            & = \mu^{b}_{i}Y^{a,i} = \mu^{b}(q)(Y^{a}).
        \end{split}
    \end{equation*}
    In addition, $A(v_{q})$ has full rank, since its columns are linearly independent. In fact suppose that
    \begin{equation*}
        c_{1}\begin{bmatrix} \mu^{1}(Y^{1}) \\
        \vdots \\
        \mu^{m}(Y^{1}) \end{bmatrix} + \cdots + c_{m}\begin{bmatrix} \mu^{1}(Y^{m}) \\
        \vdots \\
        \mu^{m}(Y^{m}) \end{bmatrix}= 0,
    \end{equation*}
    which is equivalent to
    \begin{equation*}
        \begin{bmatrix} \mu^{1}(c_{1}Y^{1}+\cdots + c_{m}Y^{m}) \\
        \vdots \\
        \mu^{m}(c_{1}Y^{1}+\cdots + c_{m}Y^{m}) \end{bmatrix}=0.
    \end{equation*}
    However, by transversality we have $\mathcal{D}\cap \mathcal{F} = \{0\}$ which implies that $c_{1}Y^{1}+\cdots + c_{m}Y^{m}=0$. Since $\{Y_{i}\}$ are linearly independent we conclude that $c_{1}=\cdots=c_{m}=0$ and $A$ has full rank. But, since $A$ is an $m\times m$ matrix, and $\mathcal{D}$ is a regular distribution, it must be invertible. Therefore, there is a unique vector $\tau^{*}(v_{q})$ satisfying the matrix equation and $\tau^{*}:\mathcal{D}\rightarrow \mathbb{R}^{m}$ is smooth since it is the solution of a matrix equation depending smoothly on $v_{q}$.
\end{proof}

\begin{remark}
    Note that in Examples \ref{se2:example} and \ref{disk:example}, the constraint distribution $\mathcal{D}$ and the control input distribution $\mathcal{F}$ are transversal. Thus the control laws obtained in there are unique by Theorem \ref{main:theorem}.\hfill$\diamond$
\end{remark}

The transversality condition is essential in order to have existence and uniqueness of the control law making the constraint distribution control invariant. If they are not transversal then a control law making $\mathcal{D}$ control invariant may not exist or may not be unique as we will see in the next examples.

\begin{example}[Non-existence]
    Consider the Lagrangian function $L$ and the distribution $\mathcal{D}$ given in Example \ref{se2:example}, but now let the control force be
    $$F(x,y,\theta,\dot{x},\dot{y},\dot{\theta},u) = u (\cos \theta dx + \sin \theta dy),$$
    so that the controlled Lagrangian system is now
    \begin{equation*}
        m\ddot{x}=u \cos\theta, \quad m\ddot{y}=u \sin\theta, \quad I\ddot{\theta}=0.
    \end{equation*}
    Note that, in this case, the control input distribution $\mathcal{F}$ is generated by the vector field $\displaystyle{Y=\frac{\cos \theta}{m}\frac{\partial}{\partial x} + \frac{\sin \theta}{m}\frac{\partial}{\partial y}}$. Hence, $\mathcal{F}\subseteq \mathcal{D}$.
    
    Suppose that a control law $\hat{u}$ making the distribution control invariant exists. Differentiating the constraints, we get
    $$\cos\theta \dot{x} + \sin\theta \ddot{x} + \sin \theta \dot{y} - \cos\theta \ddot{y} = 0,$$
    and substituting by the closed-loop system we get
    $$0 = \cos\theta \dot{x} + \frac{\hat{u} \sin\theta\cos\theta}{m} + \sin \theta \dot{y} -  \frac{\hat{u} \sin\theta \cos\theta}{m},$$
    which is satisfied only when $\cos\theta \dot{x} + \sin \theta \dot{y} =0$. Therefore, there is no control law $\hat{u}$ making the distribution control invariant.\hfill$\diamond$
    \end{example}
    
    \begin{example}[Non-uniqueness]
        Consider again the situation given in Example \ref{se2:example} but now with the control force
        \begin{equation*}
            \begin{split}
                F(x,y,\theta,\dot{x},\dot{y},\dot{\theta},u) & = u_{1} (\sin \theta dx-\cos \theta dy +d\theta) \\
                &  + u_{2}(\sin \theta dx-\cos \theta dy).
            \end{split}
        \end{equation*}
        In this case, we have that $TQ=\mathcal{D} + \mathcal{F}$ but $\mathcal{D}\cap \mathcal{F} \neq \{0\}$.
        Two examples of control laws making $\mathcal{D}$ control invariant are
        $$\hat{u}_{1} = -m\dot{\theta}(\cos\theta \dot{x} +\sin \theta \dot{y}), \quad \hat{u}_{2} = 0$$
        and
        $$\hat{u}_{1} = 0, \quad \hat{u}_{2}=-m\dot{\theta}(\cos\theta \dot{x} +\sin \theta \dot{y}).$$\hfill$\diamond$
    \end{example}

\section{The induced constrained connection}\label{sec4}

From now on suppose that the distribution $\mathcal{D}$ describing the virtual nonholonomic constraints and the input distribution $\mathcal{F}$ are transversal. Therefore, the projections $P_{\mathcal{F}}:TQ\rightarrow \mathcal{F}$ and $P_{\mathcal{D}}:TQ\rightarrow \mathcal{D}$ associated to the direct sum are well-defined.

The \textit{induced constrained connection} associated to the distribution $\mathcal{D}$ and the input distribution $\mathcal{F}$ is given by
\begin{equation}\label{virtual:nh:connection}
    \overset{c}{\nabla}_{X} Y = \nabla^{\mathcal{G}}_X Y + (\nabla_{X}^{\mathcal{G}}P_{\mathcal{F}})(Y),
\end{equation}
where $\nabla^{\mathcal{G}}$ is the Levi-Civita connection associated with the Riemannian metric $\mathcal{G}$. The induced constrained connection is a linear connection on $Q$ with the special property that $\mathcal{D}$ is geodesically invariant for $\overset{c}{\nabla}$, i.e., if a geodesic of $\overset{c}{\nabla}$ starts on $\mathcal{D}$ then it stays in $\mathcal{D}$ for all time (see \cite{lewis1998affine}).

We have the following useful lemma that we will use later on.

\begin{lemma}\label{constrained:property}
    If $X,Y \in \Gamma(\tau_{\mathcal{D}})$ then
    $$\overset{c}{\nabla}_{X} Y = P_{\mathcal{D}}(\nabla^{\mathcal{G}}_X Y).$$
\end{lemma}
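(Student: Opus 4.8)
The plan is to simply unwind the definition \eqref{virtual:nh:connection} of the constrained connection and exploit the fact that the projector $P_{\mathcal{F}}$ kills everything valued in $\mathcal{D}$. First I would recall that $P_{\mathcal{F}}$ is a $(1,1)$-tensor field on $Q$ (a smooth bundle endomorphism of $TQ$), so its covariant derivative with respect to the Levi-Civita connection obeys the usual Leibniz/product rule
\begin{equation*}
    (\nabla^{\mathcal{G}}_X P_{\mathcal{F}})(Y) = \nabla^{\mathcal{G}}_X\bigl(P_{\mathcal{F}}(Y)\bigr) - P_{\mathcal{F}}\bigl(\nabla^{\mathcal{G}}_X Y\bigr),
\end{equation*}
which holds for arbitrary vector fields $X$ and $Y$ and is where all the work is hidden.

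Next I would use the hypothesis that $Y\in\Gamma(\tau_{\mathcal{D}})$, i.e.\ $Y$ takes values in $\mathcal{D}$. Because $TQ=\mathcal{D}\oplus\mathcal{F}$ and $P_{\mathcal{F}}$ is the projection onto $\mathcal{F}$ along $\mathcal{D}$, it vanishes identically on $\mathcal{D}$; hence $P_{\mathcal{F}}(Y)=0$ as a vector field, and the first term on the right-hand side above drops out, leaving $(\nabla^{\mathcal{G}}_X P_{\mathcal{F}})(Y) = -P_{\mathcal{F}}(\nabla^{\mathcal{G}}_X Y)$. Substituting into \eqref{virtual:nh:connection} yields
\begin{equation*}
    \overset{c}{\nabla}_X Y = \nabla^{\mathcal{G}}_X Y - P_{\mathcal{F}}\bigl(\nabla^{\mathcal{G}}_X Y\bigr) = (\mathrm{Id}-P_{\mathcal{F}})\bigl(\nabla^{\mathcal{G}}_X Y\bigr).
\end{equation*}

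Finally I would invoke the complementarity of the two projectors, $P_{\mathcal{D}}+P_{\mathcal{F}}=\mathrm{Id}$, which is immediate from the direct-sum decomposition, to rewrite $\mathrm{Id}-P_{\mathcal{F}}=P_{\mathcal{D}}$ and conclude $\overset{c}{\nabla}_X Y = P_{\mathcal{D}}(\nabla^{\mathcal{G}}_X Y)$, as claimed. I do not expect a genuine obstacle here: the only point that needs care is the bookkeeping of the tensorial covariant derivative in the first step. It is also worth remarking that only the assumption $Y\in\Gamma(\tau_{\mathcal{D}})$ is actually used, the field $X$ merely supplying the direction of differentiation.
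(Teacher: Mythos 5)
Your proposal is correct and follows essentially the same route as the paper's own proof: expand the definition \eqref{virtual:nh:connection}, apply the Leibniz rule for the covariant derivative of the $(1,1)$-tensor $P_{\mathcal{F}}$, kill the term $\nabla^{\mathcal{G}}_X(P_{\mathcal{F}}(Y))$ using $P_{\mathcal{F}}(Y)=0$, and identify $\mathrm{Id}-P_{\mathcal{F}}=P_{\mathcal{D}}$. Your closing observation that only $Y\in\Gamma(\tau_{\mathcal{D}})$ is actually needed (with $X$ arbitrary) is a valid sharpening the paper does not state explicitly.
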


\begin{proof}
If $X,Y \in \Gamma(\tau_{\mathcal{D}})$ we have that
    \begin{equation*}
        \begin{split}
            \overset{c}{\nabla}_{X} Y = & \nabla^{\mathcal{G}}_{X} Y + (\nabla^{\mathcal{G}}_{X}P_{\mathcal{F}})(Y) \\
             = & \nabla^{\mathcal{G}}_{X} Y + \nabla^{\mathcal{G}}_{X}(P_{\mathcal{F}}(Y))- P_{\mathcal{F}}(\nabla^{\mathcal{G}}_{X} Y),
        \end{split}
    \end{equation*}
    where we have used the definition of covariant derivative of a map of the form $T:TQ\rightarrow TQ$ in the last equality. Noting that $P_{\mathcal{F}}(Y)=0$ since $Y$ is a section of $\tau_{\mathcal{D}}$, we conclude that $\overset{c}{\nabla}_{X} Y = P_{\mathcal{D}}(\nabla^{\mathcal{G}}_{X} Y).$
\end{proof}

The last lemma implies in particular that $\overset{c}{\nabla}$ is well-defined as a connection on sections of $\tau_{\mathcal{D}}$ in the sense that the restriction $\overset{c}{\nabla}|_{\Gamma(\tau_{\mathcal{D}})\times \Gamma(\tau_{\mathcal{D}})}$ takes values also on $\Gamma(\tau_{\mathcal{D}})$. However, as the following lemma shows the constrained connection is not symmetric, in general.

\begin{lemma}\label{symmetric:lemma}
    If the constrained connection $\overset{c}{\nabla}$ is symmetric then the constraint distribution $\mathcal{D}$ is integrable.
\end{lemma}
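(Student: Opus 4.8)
The plan is to compute the torsion of the constrained connection $\overset{c}{\nabla}$ and evaluate it on sections of $\mathcal{D}$. Since symmetry of a linear connection is exactly the vanishing of its torsion, the hypothesis will translate into a condition on $[X,Y]$ for $X,Y\in\Gamma(\tau_{\mathcal{D}})$, and the Frobenius theorem will then deliver integrability. First I would write the torsion
$$T(X,Y)=\overset{c}{\nabla}_{X}Y-\overset{c}{\nabla}_{Y}X-[X,Y].$$
Substituting the definition \eqref{virtual:nh:connection} and using that the Levi-Civita connection $\nabla^{\mathcal{G}}$ is symmetric, the bare $\nabla^{\mathcal{G}}_{X}Y-\nabla^{\mathcal{G}}_{Y}X$ cancels against $-[X,Y]$, leaving
$$T(X,Y)=(\nabla^{\mathcal{G}}_{X}P_{\mathcal{F}})(Y)-(\nabla^{\mathcal{G}}_{Y}P_{\mathcal{F}})(X).$$

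Next I would restrict to $X,Y\in\Gamma(\tau_{\mathcal{D}})$ and expand the covariant derivative of the tensor $P_{\mathcal{F}}$ exactly as in the proof of Lemma~\ref{constrained:property}, namely $(\nabla^{\mathcal{G}}_{X}P_{\mathcal{F}})(Y)=\nabla^{\mathcal{G}}_{X}(P_{\mathcal{F}}(Y))-P_{\mathcal{F}}(\nabla^{\mathcal{G}}_{X}Y)$. Since $X$ and $Y$ are sections of $\tau_{\mathcal{D}}$ we have $P_{\mathcal{F}}(X)=P_{\mathcal{F}}(Y)=0$, so the terms $\nabla^{\mathcal{G}}_{X}(P_{\mathcal{F}}(Y))$ and $\nabla^{\mathcal{G}}_{Y}(P_{\mathcal{F}}(X))$ drop out and the expression collapses, using once more the symmetry of $\nabla^{\mathcal{G}}$, to
$$T(X,Y)=-P_{\mathcal{F}}\big(\nabla^{\mathcal{G}}_{X}Y-\nabla^{\mathcal{G}}_{Y}X\big)=-P_{\mathcal{F}}([X,Y]).$$

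Finally, if $\overset{c}{\nabla}$ is symmetric then $T\equiv 0$, so $P_{\mathcal{F}}([X,Y])=0$ for all $X,Y\in\Gamma(\tau_{\mathcal{D}})$. Because $P_{\mathcal{F}}$ is the projection onto $\mathcal{F}$ along the splitting $TQ=\mathcal{D}\oplus\mathcal{F}$, its kernel is precisely $\mathcal{D}$; hence $[X,Y]\in\Gamma(\tau_{\mathcal{D}})$, and the Frobenius theorem gives integrability of $\mathcal{D}$. I do not anticipate a genuine obstacle: the statement follows once the torsion is identified, and the only point requiring care is the reduction $T(X,Y)=-P_{\mathcal{F}}([X,Y])$ on $\mathcal{D}$-sections, which relies on the tensoriality of $P_{\mathcal{F}}$ and on $P_{\mathcal{F}}$ annihilating $\mathcal{D}$. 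It is worth noting that the argument in fact yields the converse as well, since the computation shows that symmetry of $\overset{c}{\nabla}$ on sections of $\mathcal{D}$ is \emph{equivalent} to $P_{\mathcal{F}}([X,Y])=0$.
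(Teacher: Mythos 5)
Your proof is correct and follows essentially the same route as the paper: both compute the torsion on sections of $\tau_{\mathcal{D}}$, reduce it to $T(X,Y)=-P_{\mathcal{F}}([X,Y])$, and conclude involutivity (hence integrability by Frobenius); the only cosmetic difference is that you inline the expansion $(\nabla^{\mathcal{G}}_{X}P_{\mathcal{F}})(Y)=\nabla^{\mathcal{G}}_{X}(P_{\mathcal{F}}(Y))-P_{\mathcal{F}}(\nabla^{\mathcal{G}}_{X}Y)$ where the paper simply invokes Lemma~\ref{constrained:property}. Your closing remark is also stated with appropriate care, since the equivalence you obtain concerns vanishing of the torsion only on $\mathcal{D}$-sections, not full symmetry of $\overset{c}{\nabla}$.
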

\begin{proof}
    The torsion of the constrained connection is given by
    $$T^{c}(X,Y) = \overset{c}{\nabla}_{X}Y - \overset{c}{\nabla}_{Y} X - [X,Y].$$
    
    Suppose that $X, Y \in \Gamma(\tau_{\mathcal{D}})$. In this case
    \begin{equation*}
        \begin{split}
            T^{c}(X,Y) & = P_{\mathcal{D}}(\nabla^{\mathcal{G}}_{X} Y-\nabla^{\mathcal{G}}_{Y} X) - [X,Y] \\
            & = P_{\mathcal{D}}([X,Y]) - [X,Y] \\
            & = -P_{\mathcal{F}}([X,Y]),
        \end{split}
    \end{equation*}
    where we used the fact that $\nabla^{\mathcal{G}}$ is symmetric in the first equality. It is clear now that if $\overset{c}{\nabla}$ is symmetric then $[X,Y]$ must be a section of $\mathcal{D}$, which implies that $\mathcal{D}$ is integrable.
\end{proof}

\begin{remark}
    Lemma \ref{symmetric:lemma} was also proved in \cite{lewis1998affine}, however we provided here an alternative simple proof in order to keep the discussion as much self-contained as possible.\hfill$\diamond$
\end{remark}

In the following, we characterize the closed-loop dynamics as solutions of the mechanical system associated with the induced constrained connection.

\begin{theorem}
    A curve $q:I\rightarrow Q$ is a trajectory of the closed-loop system for the Lagrangian control system \eqref{lagrangian:control:system} making $\mathcal{D}$ invariant if and only if it satifies
    \begin{equation}\label{constrained:equation}
        \overset{c}{\nabla}_{\dot{q}(t)} \dot{q}(t) + P_{\mathcal{D}}(\text{grad}_{\mathcal{G}} V(q(t)))=0.
    \end{equation}
\end{theorem}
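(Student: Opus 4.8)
The plan is to prove the equivalence by decomposing the closed-loop equation along the transversal splitting $TQ=\mathcal{D}\oplus\mathcal{F}$ and exploiting two facts: the input vector fields $Y^{a}=\sharp(f^{a})$ span $\mathcal{F}=\ker P_{\mathcal{D}}$, and Lemma \ref{constrained:property} lets me replace $P_{\mathcal{D}}(\nabla^{\mathcal{G}}_{\dot q}\dot q)$ by $\overset{c}{\nabla}_{\dot q}\dot q$ whenever $\dot q\in\mathcal{D}$. Throughout I restrict to curves with $\dot q(0)\in\mathcal{D}_{q(0)}$, which is the relevant class for trajectories making $\mathcal{D}$ invariant.

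First I would handle the forward implication. Let $q$ be a trajectory of the closed-loop system, i.e. a solution of \eqref{lagrangian:control:system} with the unique control $\hat u$ of Theorem \ref{main:theorem} substituted and with $\dot q(0)\in\mathcal{D}$. By the invariance property defining a virtual nonholonomic constraint, $\dot q(t)\in\mathcal{D}$ for all $t$, so $\dot q$ is a section of $\tau_{\mathcal{D}}$ along $q$. Applying $P_{\mathcal{D}}$ to the closed-loop equation $\nabla^{\mathcal{G}}_{\dot q}\dot q+\text{grad}_{\mathcal{G}}V=\hat u_{a}Y^{a}$ and using $P_{\mathcal{D}}(Y^{a})=0$ annihilates the control term, while Lemma \ref{constrained:property} converts $P_{\mathcal{D}}(\nabla^{\mathcal{G}}_{\dot q}\dot q)$ into $\overset{c}{\nabla}_{\dot q}\dot q$, yielding \eqref{constrained:equation} at once.

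For the converse, let $q$ solve \eqref{constrained:equation} with $\dot q(0)\in\mathcal{D}$. I would first argue that $\dot q(t)\in\mathcal{D}$ for all $t$: the second-order vector field on $TQ$ whose integral curves project to solutions of \eqref{constrained:equation} is the sum of the geodesic spray of $\overset{c}{\nabla}$, which is tangent to $\mathcal{D}\subseteq TQ$ by geodesic invariance (\cite{lewis1998affine}), and the vertical lift of $-P_{\mathcal{D}}(\text{grad}_{\mathcal{G}}V)$, which is tangent to $\mathcal{D}$ because the vertical subspace of $T_{v_q}\mathcal{D}$ is exactly the fiber $\mathcal{D}_{q}$ and this lift is of a $\mathcal{D}$-valued field; hence the flow preserves $\mathcal{D}$. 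With $\dot q\in\mathcal{D}$ secured, Lemma \ref{constrained:property} reads \eqref{constrained:equation} as $P_{\mathcal{D}}(\nabla^{\mathcal{G}}_{\dot q}\dot q+\text{grad}_{\mathcal{G}}V)=0$, i.e. $\nabla^{\mathcal{G}}_{\dot q}\dot q+\text{grad}_{\mathcal{G}}V\in\ker P_{\mathcal{D}}=\mathcal{F}$. Since $\{Y^{a}\}$ is a frame for $\mathcal{F}$, there exist smooth $u_{a}(t)$ with $\nabla^{\mathcal{G}}_{\dot q}\dot q+\text{grad}_{\mathcal{G}}V=u_{a}Y^{a}$, so $q$ solves \eqref{lagrangian:control:system}; and because $\dot q$ stays in $\mathcal{D}$, the associated closed-loop field is tangent to $\mathcal{D}$, so the uniqueness in Theorem \ref{main:theorem} forces $u_{a}=\hat u_{a}(\dot q)$, identifying $q$ as a closed-loop trajectory making $\mathcal{D}$ invariant.

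The hard part will be the converse: genuinely showing that a solution of \eqref{constrained:equation} keeps its velocity in $\mathcal{D}$ (rather than assuming it) and then matching the recovered control with the unique $\hat u$. The forward direction is essentially the one-line projection computation above. The delicate point I would take care over is the vertical-lift tangency claim, which is what upgrades the bare geodesic invariance of $\overset{c}{\nabla}$ to invariance under the \emph{forced} equation \eqref{constrained:equation}.
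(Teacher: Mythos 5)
Your proof is correct and follows the paper's own strategy in both directions: project the closed-loop equation with $P_{\mathcal{D}}$, kill the input term via $P_{\mathcal{D}}(Y^{a})=0$, convert $P_{\mathcal{D}}(\nabla^{\mathcal{G}}_{\dot q}\dot q)$ into $\overset{c}{\nabla}_{\dot q}\dot q$ by Lemma \ref{constrained:property}, and, in the converse, use $\ker P_{\mathcal{D}}=\mathcal{F}$ together with the uniqueness statement of Theorem \ref{main:theorem} to identify the recovered coefficients with $\hat u$. The one genuine difference is that you prove, rather than assume, that a solution of \eqref{constrained:equation} with $\dot q(0)\in\mathcal{D}$ keeps its velocity in $\mathcal{D}$: the paper's converse applies Lemma \ref{constrained:property} directly, which tacitly presupposes $\dot q(t)\in\mathcal{D}_{q(t)}$ for all $t$, since that lemma is only valid on sections of $\tau_{\mathcal{D}}$. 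Your decomposition of the associated second-order field into the geodesic spray of $\overset{c}{\nabla}$ (tangent to $\mathcal{D}\subseteq TQ$ by the geodesic invariance cited from \cite{lewis1998affine}) plus the vertical lift of the $\mathcal{D}$-valued force $-P_{\mathcal{D}}(\text{grad}_{\mathcal{G}}V)$ is sound: since $\mathcal{D}$ is a vector subbundle, the curve $t\mapsto v_{q}+t\,w_{q}$ stays in $\mathcal{D}$ whenever $w_{q}\in\mathcal{D}_{q}$, so such vertical lifts are indeed tangent to $\mathcal{D}$, and the sum of two fields tangent to the embedded submanifold $\mathcal{D}$ has flow preserving it. This closes a step the paper leaves implicit, so your write-up is, if anything, more complete than the published proof on exactly the point you flagged as delicate; the same care also justifies the final appeal to uniqueness, since tangency of the recovered closed-loop field along $\dot q(t)\in\mathcal{D}$ is what the uniqueness argument of Theorem \ref{main:theorem} actually requires.
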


\begin{proof}
    If $q:I\rightarrow Q$ is a trajectory of the closed-loop system for \eqref{lagrangian:control:system} with $\dot{q}(t)\in \mathcal{D}_{q(t)}$ then it satisfies
    $$\nabla^{\mathcal{G}}_{\dot{q}(t)} \dot{q}(t) + \text{grad}_{\mathcal{G}}V(q(t))=\hat{u}_{a}(t)Y^{a}(q(t)),$$
    where $\hat{u}:\mathcal{D}\rightarrow \mathbb{R}^{m}$ is the unique control law making $\mathcal{D}$ invariant. Attending to the fact that $\dot{q}(t)\in \mathcal{D}_{q(t)}$ we have that
    \begin{equation*}
    \begin{split}
            \overset{c}{\nabla}_{\dot{q}(t)} \dot{q}(t) = & P_{\mathcal{D}}(\nabla^{\mathcal{G}}_{\dot{q}(t)} \dot{q}(t)) \\
             = & -P_{\mathcal{D}}(\text{grad}_{\mathcal{G}}V(q(t))) + P_{\mathcal{D}}(\hat{u}_{a}(t)Y^{a}(q(t))) \\
             = & - P_{\mathcal{D}}(\text{grad}_{\mathcal{G}}V(q(t))),
        \end{split}
    \end{equation*}
    where we have used Lemma \ref{constrained:property} in the first equality and $P_{\mathcal{D}}(Y^{a})=0$ in the last one.
    
    Conversely, if the curve $q$ satisfies \eqref{constrained:equation}, we have
    $$P_{\mathcal{D}}(\nabla^{\mathcal{G}}_{\dot{q}(t)} \dot{q}(t) + \text{grad}_{\mathcal{G}}V(q(t)))=0,$$
    where we used Lemma \ref{constrained:property}. Since $\ker P_{\mathcal{D}} = \mathcal{F}$, there exist $u=(u_{1}, \cdots, u_{m})\in \mathbb{R}^{m}$ such that
    $$\nabla^{\mathcal{G}}_{\dot{q}(t)} \dot{q}(t) + \text{grad}_{\mathcal{G}}V(q(t)) = u_{a}Y^{a}.$$
    By Theorem \ref{main:theorem}, we conclude that $u=\hat{u}$, since the control law making $\mathcal{D}$ invariant is unique.
\end{proof}

\begin{remark}
    Suppose $\mathcal{D}$ is an integrable distribution and assume $\mathcal{C}$ is a maximal integrable manifold of $\mathcal{D}$. If $\overset{h}{\nabla}$ denotes the holonomic connection on $\mathcal{C}$ defined in \cite{consolini2015induced} (see also \cite{consolini2018coordinate}), as $$\overset{h}{\nabla}_{X}Y = P_{\mathcal{D}}(\nabla^{\mathcal{G}}_{X} Y), \quad X,Y \in \mathfrak{X}(\mathcal{C}),$$ then Lemma \ref{constrained:property} implies that the two connections are the same when $\overset{c}{\nabla}$ is restricted to vector fields on $\mathcal{C}$.\hfill$\diamond$
\end{remark}

\subsection{The constrained connection in coordinates}

In this section we will compute the Christoffel symbols of the induced connection. Given any coordinate chart $(q^{i})$ on $Q$ the Christoffel symbols are determined by the values of the connection taken over the standard basis of the tangent space $\{\frac{\partial}{\partial q^{1}}, \cdots, \frac{\partial}{\partial q^{n}} \}$. It is not difficult to prove the following useful expression
\begin{equation*}
    \overset{c}{\nabla}_{\frac{\partial}{\partial q^{i}}}\frac{\partial}{\partial q^{j}}=P_{\mathcal{D}}\left(\nabla^{\mathcal{G}}_{\frac{\partial}{\partial q^{i}}}\frac{\partial}{\partial q^{j}}\right) + \nabla^{\mathcal{G}}_{\frac{\partial}{\partial q^{i}}}\left(P_{\mathcal{F}}\left(\frac{\partial}{\partial q^{j}}\right)\right).
\end{equation*}

\begin{example}
    Consider once again the control system given in Example  \ref{se2:example}. The Levi-Civita connection $\nabla^{\mathcal{G}}$ associated with this system has vanishing Christoffel symbols. Considering the coordinates $q=(x,y,\theta)$ on $SE(2)$, we have that
    $$\overset{c}{\nabla}_{\frac{\partial}{\partial q^{i}}}\frac{\partial}{\partial q^{j}}=\nabla^{\mathcal{G}}_{\frac{\partial}{\partial q^{i}}}\left(P_{\mathcal{F}}\left(\frac{\partial}{\partial q^{j}}\right)\right).$$
    
    Note that the natural coordinate vector fields for $SE(2)$ may be decomposed in a unique way, under the direct sum $\mathcal{D}\oplus \mathcal{F}$, and this decomposition is given by
    \begin{equation*}
        \begin{split}
            & \frac{\partial}{\partial x} = \cos \theta X_{1} - \frac{m \sin \theta}{I} X_{2} + m \sin \theta Y, \\
            & \frac{\partial}{\partial y} = \sin \theta X_{1} + \frac{m \cos \theta}{I} X_{2} - m \cos \theta Y, \\
            & \frac{\partial}{\partial \theta} = X_{2}.
        \end{split}
    \end{equation*}
    
    Hence, we obtain the following non-vanishing Christoffel symbols for the constrained connection $\overset{c}{\nabla}$
    \begin{equation*}
        \begin{split}
            & \Gamma_{\theta x}^{x}= 2\sin\theta\cos\theta, \,\quad \quad \Gamma_{\theta y}^{x}= \sin^{2} \theta - \cos^{2} \theta, \\
            & \Gamma_{\theta x}^{y}= \sin^{2} \theta - \cos^{2} \theta,\,\quad \Gamma_{\theta y}^{y}= -2\sin\theta\cos\theta,\\
              & \Gamma_{\theta x}^{\theta}=\frac{m \cos\theta}{I}, \quad\,\,\,\,\,\,\,\,\qquad \Gamma_{\theta y}^{\theta}=\frac{m \sin\theta}{I}.
        \end{split}
    \end{equation*}
   If we introduce the coordinates $q=(x,y,\theta,\varphi)$ in Example \ref{disk:example} and following the same reasoning we get
   \begin{small} 
    \begin{equation*}
        \begin{split}
            P_{\mathcal{F}}\left(\frac{\partial}{\partial x}\right) & = \frac{I J m + J m^{2} \sin^{2}{\left(\varphi \right)}}{L(\varphi)} Y^{1} - \frac{J m^{2} \sin{\left(\varphi \right)} \cos{\left(\varphi \right)}}{L(\varphi)} Y^{2} \\
            P_{\mathcal{F}}\left(\frac{\partial}{\partial y}\right) & = \frac{I m - J m^{2} \sin{\left(\varphi \right)} \cos{\left(\varphi \right)}}{L(\varphi)}Y^{1} + \frac{- I m + J m^{2} \cos^{2}{\left(\varphi \right)}}{L(\varphi)} Y^{2} \\
            P_{\mathcal{F}}\left(\frac{\partial}{\partial \theta}\right) & = \frac{- I J m \cos{\left(\varphi \right)} - I m \sin{\left(\varphi \right)}}{L(\varphi)}Y^{1} + \frac{I m \sin{\left(\varphi \right)}}{L(\varphi)} Y^{2} \\
            P_{\mathcal{F}}\left(\frac{\partial}{\partial \varphi}\right) & = \frac{- I J - J m \sin^{2}{\left(\varphi \right)}}{L(\varphi)} Y^{1} + \frac{J m \sin{\left(\varphi \right)} \cos{\left(\varphi \right)}}{L(\varphi)} Y^{2},
        \end{split}
    \end{equation*}
    \end{small}with $L(\varphi) = - I + J m \cos^{2}{\left(\varphi \right)} - m \sin^{2}{\left(\varphi \right)} + m \sin{\left(\varphi \right)} \cos{\left(\varphi \right)}$. In addition, the non-vanishing Christoffel symbols are given in Appendix \ref{appendix}.
\end{example}

\section{Existence of a nonholonomic Lagrangian structure for the dynamics on $\mathcal{D}$}\label{sec5}

The next proposition shows that if the input distribution is orthogonal to the virtual nonholonomic constraint distribution then the constrained dynamics is precisely the nonholonomic dynamics with respect to the original Lagrangian function.

\begin{proposition}\label{orthogonal:input:distribution}
If the input distribution $\mathcal{F}$ is orthogonal to the virtual constraint distribution $\mathcal{D}$ with respect to the metric $\mathcal{G}$, then the trajectories of the constrained mechanical system \eqref{constrained:equation} are the nonholonomic equations of motion.
\end{proposition}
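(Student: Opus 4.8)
The plan is to show that the orthogonality hypothesis forces the oblique projectors $P_{\mathcal{D}}$ and $P_{\mathcal{F}}$ appearing in the induced constrained connection to coincide with the orthogonal projectors $\mathcal{P}$ and $\mathcal{Q}$ used in the nonholonomic connection \eqref{nh:connection}. Once this identification is established, equation \eqref{constrained:equation} becomes, term by term, the nonholonomic equation \eqref{nonholonomic:mechanical:equation}, and there is nothing further to prove.

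First I would establish that $\mathcal{F} = \mathcal{D}^{\perp}$. By the standing assumption of Section \ref{sec4}, the distributions $\mathcal{D}$ and $\mathcal{F}$ are transversal, so $TQ = \mathcal{D} \oplus \mathcal{F}$ and, in particular, $\mathcal{F}$ has fiberwise rank $m$, the corank of $\mathcal{D}$. The orthogonality hypothesis gives the inclusion $\mathcal{F} \subseteq \mathcal{D}^{\perp}$. Since $\dim \mathcal{D}^{\perp} = m = \dim \mathcal{F}$ at each point, this inclusion is an equality of subbundles, so $\mathcal{F} = \mathcal{D}^{\perp}$.

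Next, I would observe that under this equality the direct sum $TQ = \mathcal{D} \oplus \mathcal{F}$ is precisely the orthogonal decomposition $TQ = \mathcal{D} \oplus \mathcal{D}^{\perp}$ with respect to $\mathcal{G}$. Consequently the projector $P_{\mathcal{F}}\colon TQ \to \mathcal{F}$ along $\mathcal{D}$ is the orthogonal projector $\mathcal{Q}\colon TQ \to \mathcal{D}^{\perp}$, and likewise $P_{\mathcal{D}} = \mathcal{P}$. Substituting $P_{\mathcal{F}} = \mathcal{Q}$ into the definition \eqref{virtual:nh:connection} yields
$$\overset{c}{\nabla}_X Y = \nabla^{\mathcal{G}}_X Y + (\nabla^{\mathcal{G}}_X \mathcal{Q})(Y) = \nabla^{nh}_X Y,$$
so the constrained connection coincides with the nonholonomic connection \eqref{nh:connection}; the same substitution turns $P_{\mathcal{D}}(\text{grad}_{\mathcal{G}} V)$ into $\mathcal{P}(\text{grad}_{\mathcal{G}} V)$.

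Putting these identifications together, equation \eqref{constrained:equation} reads $\nabla^{nh}_{\dot{q}}\dot{q} + \mathcal{P}(\text{grad}_{\mathcal{G}} V) = 0$, which is exactly \eqref{nonholonomic:mechanical:equation}, the nonholonomic equations of motion for the triple $(\mathcal{G}, V, \mathcal{D})$. I do not expect a serious obstacle in this argument; the only step requiring care is the fiberwise dimension count that upgrades the inclusion $\mathcal{F} \subseteq \mathcal{D}^{\perp}$ to the equality $\mathcal{F} = \mathcal{D}^{\perp}$, which is where the transversality assumption is genuinely used.
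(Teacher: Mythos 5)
Your proof is correct and takes essentially the same route as the paper's: once $\mathcal{F}=\mathcal{D}^{\perp}$, the projectors $P_{\mathcal{D}}$, $P_{\mathcal{F}}$ coincide with $\mathcal{P}$, $\mathcal{Q}$, hence $\overset{c}{\nabla}=\nabla^{nh}$ and \eqref{constrained:equation} is literally \eqref{nonholonomic:mechanical:equation}. Your preliminary fiberwise dimension count upgrading the inclusion $\mathcal{F}\subseteq\mathcal{D}^{\perp}$ to equality is a detail the paper's proof passes over (it starts directly from $\mathcal{F}=\mathcal{D}^{\perp}$), and it is a sound use of the standing transversality assumption.
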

\begin{proof}
 
 If $\mathcal{F}=\mathcal{D}^{\bot}$, then the projectors $P_{\mathcal{D}}$ and $\mathcal{P}$ coincide (as well as the projectors $P_{\mathcal{F}}$ and $\mathcal{Q}$). Thus, the constrained connection $\overset{c}{\nabla}$ is precisely the nonholonomic connection $\nabla^{nh}$. This implies that the trajectories of the constrained connection are nonholonomic trajectories.
     \end{proof}
     
\begin{remark}
    The fact that $\mathcal{F}=\mathcal{D}^{\bot}$ is independent of the chosen metric. Once you fix the control force $F$ and let the control input distribution be obtained using the musical isomorphism $\sharp$ as in Section \ref{sec:controler}, then $\mathcal{F}$ is orthogonal to $\mathcal{D}$ if and only if $f^{a}\in \mathcal{D}^{o}$, for $a=1, \cdots, m$.\hfill$\diamond$
\end{remark}

Although the orthogonality condition $\mathcal{F}=\mathcal{D}^{\bot}$ is sufficient in order for the constrained dynamics to be the nonholonomic dynamics, it is not necessary as the following result shows.

\begin{proposition}
Suppose there exists a modified potential function $\tilde{V}$ satisfying
\begin{equation}
    \mathcal{P}(\text{grad}_{\mathcal{G}} \tilde{V}) = P_{\mathcal{\mathcal{D}}}(\text{grad}_{\mathcal{G}} V).
\end{equation}
Then the nonholonomic trajectories with respect to $(\mathcal{G},\tilde{V}, \mathcal{D})$ coincide with the constrained dynamics \eqref{constrained:equation} if and only if $\nabla_{X}^{\mathcal{G}}\mathcal{Q}(X)=\nabla_{X}^{\mathcal{G}}P_{\mathcal{F}}(X)$ for all $X\in \Gamma(\mathcal{D})$.
\end{proposition}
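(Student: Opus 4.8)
The plan is to write down both sets of equations of motion explicitly and compare them term by term. The nonholonomic equations for the system $(\mathcal{G},\tilde{V},\mathcal{D})$ are given by \eqref{nonholonomic:mechanical:equation} with the modified potential; expanding $\nabla^{nh}$ through \eqref{nh:connection} they read $\nabla^{\mathcal{G}}_{\dot q}\dot q + (\nabla^{\mathcal{G}}_{\dot q}\mathcal{Q})(\dot q) + \mathcal{P}(\text{grad}_{\mathcal{G}}\tilde V)=0$, while expanding $\overset{c}{\nabla}$ through \eqref{virtual:nh:connection} turns the constrained dynamics \eqref{constrained:equation} into $\nabla^{\mathcal{G}}_{\dot q}\dot q + (\nabla^{\mathcal{G}}_{\dot q}P_{\mathcal{F}})(\dot q) + P_{\mathcal{D}}(\text{grad}_{\mathcal{G}} V)=0$. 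First I would invoke the standing hypothesis $\mathcal{P}(\text{grad}_{\mathcal{G}}\tilde V)=P_{\mathcal{D}}(\text{grad}_{\mathcal{G}} V)$ to cancel the two potential terms. The two equations then differ only in their velocity-quadratic parts, so subtracting one from the other reduces the whole question to whether $(\nabla^{\mathcal{G}}_{\dot q}\mathcal{Q})(\dot q)=(\nabla^{\mathcal{G}}_{\dot q}P_{\mathcal{F}})(\dot q)$ along the relevant curves.

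Next I would record the structural facts that make the comparison legitimate. Both connections leave $\mathcal{D}$ invariant: for $\overset{c}{\nabla}$ this is the geodesic invariance quoted after \eqref{virtual:nh:connection} together with Lemma \ref{constrained:property}, and for $\nabla^{nh}$ it is the defining property of nonholonomic dynamics; since both potential terms lie in $\mathcal{D}$, each equation defines a genuine second-order vector field on the submanifold $\mathcal{D}\subseteq TQ$. Two such dynamics have the same trajectories, for every initial condition $v_q\in\mathcal{D}_q$, if and only if the two vector fields agree at every point of $\mathcal{D}$. Moreover the map $X\mapsto(\nabla^{\mathcal{G}}_{X}\mathcal{Q})(X)-(\nabla^{\mathcal{G}}_{X}P_{\mathcal{F}})(X)$ is pointwise quadratic and tensorial in $X$, being built from the difference of two affine connections, so its vanishing on $\mathcal{D}_q$ for every $q$ is exactly the assertion $(\nabla^{\mathcal{G}}_{X}\mathcal{Q})(X)=(\nabla^{\mathcal{G}}_{X}P_{\mathcal{F}})(X)$ for all $X\in\Gamma(\mathcal{D})$.

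With this in place the equivalence follows in both directions. If the tensor identity holds for all $X\in\Gamma(\mathcal{D})$, then in particular it holds for $X=\dot q$ along any curve with $\dot q\in\mathcal{D}$, so the two equations are literally identical and uniqueness of solutions forces the trajectories to coincide. Conversely, if the trajectories coincide for every admissible initial condition, then through each $q$ and each $v_q\in\mathcal{D}_q$ there passes a common trajectory, along which the two second-order vector fields agree; evaluating the difference at $(q,v_q)$ and using that the potential terms already cancel yields $(\nabla^{\mathcal{G}}_{v_q}\mathcal{Q})(v_q)=(\nabla^{\mathcal{G}}_{v_q}P_{\mathcal{F}})(v_q)$ for every $v_q\in\mathcal{D}_q$, which is the desired identity.

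The main obstacle I anticipate is the converse direction, namely the passage from an equality holding only along trajectories to a pointwise equality of quadratic forms on $\mathcal{D}$. The point to get right is that existence and uniqueness of integral curves of the constrained second-order equation guarantees that every $v_q\in\mathcal{D}_q$ is realized as the initial velocity of some trajectory, so no velocities in $\mathcal{D}$ are missed, and that the full content of coincidence is carried by the quadratic term since the velocity-independent potential pieces are already equal by hypothesis. A secondary point to state carefully is the reading of the covariant derivatives of the $(1,1)$-tensors $\mathcal{Q}$ and $P_{\mathcal{F}}$ as the tensorial objects $(\nabla^{\mathcal{G}}_X\mathcal{Q})(Y)=\nabla^{\mathcal{G}}_X(\mathcal{Q}Y)-\mathcal{Q}(\nabla^{\mathcal{G}}_X Y)$, so that the stated condition is interpreted as the diagonal of these tensors rather than as $\nabla^{\mathcal{G}}_X(\mathcal{Q}X)$, which would vanish trivially on $\mathcal{D}$.
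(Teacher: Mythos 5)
Your proposal is correct and follows essentially the same route as the paper's own proof: the potential terms cancel by hypothesis, the forward direction is the literal identity of the two equations of motion, and the converse uses tensoriality of the difference tensor $D(X,Y)=\overset{c}{\nabla}_X Y-\nabla^{nh}_X Y$ together with existence and uniqueness for the ODE to realize every $v_q\in\mathcal{D}_q$ as an initial velocity along a common trajectory. Your closing remark that the condition must be read as the diagonal $(\nabla^{\mathcal{G}}_X\mathcal{Q})(X)$ of the covariant derivative of the $(1,1)$-tensor, rather than $\nabla^{\mathcal{G}}_X(\mathcal{Q}X)$ which vanishes trivially on $\Gamma(\mathcal{D})$, matches the paper's intended meaning, as its subsequent reformulation $\mathcal{Q}(\nabla^{\mathcal{G}}_X X)=P_{\mathcal{F}}(\nabla^{\mathcal{G}}_X X)$ confirms.
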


\begin{proof}
    It is not difficult to see that $\nabla_{X}^{\mathcal{G}}\mathcal{Q}(X)=\nabla_{X}^{\mathcal{G}}P_{\mathcal{F}}(X)$ if and only if the two connections satisfy $\overset{c}{\nabla}_{X} X = \nabla_{X}^{nh} X$.
    Therefore, the equation
    $$\overset{c}{\nabla}_{\dot{q}(t)} \dot{q}(t) + P_{\mathcal{D}}(\text{grad}_{\mathcal{G}} V(q(t)))=0$$
    holds if and only if
    $${\nabla}^{nh}_{\dot{q}(t)} \dot{q}(t) + \mathcal{P}(\text{grad}_{\mathcal{G}} \tilde{V}(q(t)))=0$$
    also holds.
    
    Conversely, if the trajectory $q(t)$ satisfies both equation, then $${\nabla}^{nh}_{\dot{q}(t)} \dot{q}(t) = \overset{c}{\nabla}_{\dot{q}(t)} \dot{q}(t)$$
    is also satisfied. Using tensoriality of the difference tensor
    $$D(X,Y)=\overset{c}{\nabla}_{X} Y - \nabla_{X}^{nh} Y,$$
    we may evaluate $D$ point-wise so that
    $$D(X_{q},X_q) = (\overset{c}{\nabla}_{X} X - \nabla_{X}^{nh} X) (q).$$
    Choosing the trajectory $q(t)$ with initial point $q$ and initial veclocity $X_{q}\in \mathcal{D}_{q}$, which is always possible thanks to the existence and uniqueness theorem for ODE, we deduce that $D(X_{q},X_{q})=0$ for any $X_{q}\in \mathcal{D}_{q}$. Hence, $D(X,X)=0$ which is equivalent to $\nabla_{X}^{\mathcal{G}}\mathcal{Q}(X)=\nabla_{X}^{\mathcal{G}}P_{\mathcal{F}}(X)$.
\end{proof}

In the absence of a potential function, i.e., $V=0$, the nonholonomic trajectories coincide with the constrained dynamics if and only if $\nabla_{X}^{\mathcal{G}}\mathcal{Q}(X)=\nabla_{X}^{\mathcal{G}}P_{\mathcal{F}}(X)$ for any $X\in \Gamma(\tau_{\mathcal{D}})$.

Note that the previous characterization of when both dynamics have the same trajectories may be equivalently written as $$\mathcal{P}(\nabla_{X}^{\mathcal{G}} X) = P_{\mathcal{D}}(\nabla_{X}^{\mathcal{G}} X) \text{ or } \mathcal{Q}(\nabla_{X}^{\mathcal{G}} X) = P_{\mathcal{F}}(\nabla_{X}^{\mathcal{G}} X)$$
for any $X\in \Gamma(\tau_{\mathcal{D}})$.

\begin{corollary}
If the geodesic vector field associated with $\nabla^{\mathcal{G}}$ is tangent to $\mathcal{D}$, then the nonholonomic trajectories coincide with the constrained geodesics and they are both the geodesics of $\nabla^{\mathcal{G}}$ with initial velocity in $\mathcal{D}$.
\end{corollary}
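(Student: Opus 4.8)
The plan is to translate the hypothesis into a purely algebraic condition on $\nabla^{\mathcal{G}}$ and then feed it directly into the characterization established just before the corollary. First I would recall the standard criterion for geodesic invariance (see \cite{lewis1998affine}, already invoked for $\overset{c}{\nabla}$ in Section \ref{sec4}): the geodesic spray of $\nabla^{\mathcal{G}}$ on $TQ$ is tangent to $\mathcal{D}$ if and only if the symmetric product $\langle X : Y \rangle = \nabla^{\mathcal{G}}_X Y + \nabla^{\mathcal{G}}_Y X$ is a section of $\tau_{\mathcal{D}}$ for all $X, Y \in \Gamma(\tau_{\mathcal{D}})$. Polarizing, this is equivalent to the diagonal condition $\nabla^{\mathcal{G}}_X X \in \Gamma(\tau_{\mathcal{D}})$ for every $X \in \Gamma(\tau_{\mathcal{D}})$, which is the form I would actually use throughout the argument.

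Next I would exploit that a projector onto $\mathcal{D}$ fixes any vector already lying in $\mathcal{D}$, independently of the complementary subspace used to define it. Since $\nabla^{\mathcal{G}}_X X \in \mathcal{D}$ for all $X \in \Gamma(\tau_{\mathcal{D}})$, both the orthogonal projector $\mathcal{P}$ and the projector $P_{\mathcal{D}}$ along $\mathcal{F}$ satisfy $\mathcal{P}(\nabla^{\mathcal{G}}_X X) = \nabla^{\mathcal{G}}_X X = P_{\mathcal{D}}(\nabla^{\mathcal{G}}_X X)$. Hence the equivalent condition $\mathcal{P}(\nabla^{\mathcal{G}}_X X) = P_{\mathcal{D}}(\nabla^{\mathcal{G}}_X X)$ from the $V=0$ characterization stated above holds automatically, and I conclude immediately that the nonholonomic trajectories coincide with the constrained dynamics \eqref{constrained:equation}.

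Finally I would identify both families with the $\nabla^{\mathcal{G}}$-geodesics having initial velocity in $\mathcal{D}$. For a curve with $\dot{q}(t) \in \mathcal{D}_{q(t)}$, Lemma \ref{constrained:property} gives $\overset{c}{\nabla}_{\dot{q}}\dot{q} = P_{\mathcal{D}}(\nabla^{\mathcal{G}}_{\dot{q}}\dot{q})$, and by the diagonal condition $\nabla^{\mathcal{G}}_{\dot{q}}\dot{q} \in \mathcal{D}$, so $P_{\mathcal{D}}(\nabla^{\mathcal{G}}_{\dot{q}}\dot{q}) = \nabla^{\mathcal{G}}_{\dot{q}}\dot{q}$; thus $\overset{c}{\nabla}_{\dot{q}}\dot{q} = 0$ is equivalent to $\nabla^{\mathcal{G}}_{\dot{q}}\dot{q} = 0$. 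The same computation, with $\mathcal{Q}$ in place of $P_{\mathcal{F}}$, yields the nonholonomic analogue $\nabla^{nh}_{\dot{q}}\dot{q} = \mathcal{P}(\nabla^{\mathcal{G}}_{\dot{q}}\dot{q}) = \nabla^{\mathcal{G}}_{\dot{q}}\dot{q}$, so the nonholonomic equation reduces to the same geodesic equation. Tangency of the spray guarantees that a $\nabla^{\mathcal{G}}$-geodesic starting with velocity in $\mathcal{D}$ stays in $\mathcal{D}$, so the constraint $\dot{q} \in \mathcal{D}$ is consistently preserved and the identification is complete.

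I expect the only genuine obstacle to be the first step, namely pinning down precisely what ``the geodesic vector field is tangent to $\mathcal{D}$'' means as a statement on $TQ$ and justifying its equivalence with $\nabla^{\mathcal{G}}_X X \in \Gamma(\tau_{\mathcal{D}})$. This is exactly the geodesic-invariance criterion of \cite{lewis1998affine}, so it can be cited rather than reproved; everything after it follows directly from Lemma \ref{constrained:property} and the characterization preceding the corollary.
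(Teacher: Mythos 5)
Your proof is correct and follows essentially the same route as the paper's: both invoke the geodesic-invariance criterion of \cite{lewis1998affine} to reduce the tangency hypothesis to $\nabla^{\mathcal{G}}_{X}X\in\Gamma(\tau_{\mathcal{D}})$ for all $X\in\Gamma(\tau_{\mathcal{D}})$, and then feed the resulting condition $\mathcal{P}(\nabla^{\mathcal{G}}_{X}X)=P_{\mathcal{D}}(\nabla^{\mathcal{G}}_{X}X)$ (the paper uses the equivalent form $\mathcal{Q}(\nabla^{\mathcal{G}}_{X}X)=0=P_{\mathcal{F}}(\nabla^{\mathcal{G}}_{X}X)$) into the characterization stated just before the corollary. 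Your final paragraph, which uses Lemma \ref{constrained:property} to show explicitly that along curves tangent to $\mathcal{D}$ both constrained equations reduce to $\nabla^{\mathcal{G}}_{\dot{q}}\dot{q}=0$, merely spells out an identification the paper leaves implicit, not a different argument.
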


\begin{proof}
    We just have to establish that the geodesic vector field associated with $\nabla^{\mathcal{G}}$ is tangent to $\mathcal{D}$ if and only if $\nabla^{\mathcal{G}}_{X} X \in \Gamma(\tau_{\mathcal{D}})$ for every $X \in \Gamma(\tau_{\mathcal{D}})$. Then this is equivalent to $\mathcal{Q}(\nabla_{X}^{\mathcal{G}} X) = 0$ and also to $P_{\mathcal{F}}(\nabla_{X}^{\mathcal{G}} X)=0$. Hence, by the previous result, the geodesics with initial velocity in $\mathcal{D}$ of $\nabla^{nh}$ coincide with the geodesics with initial velocity in $\mathcal{D}$ of $\overset{c}{\nabla}$.
    
    Now, $\nabla^{\mathcal{G}}_{X} X \in \Gamma(\tau_{\mathcal{D}})$ for every $X \in \Gamma(\tau_{\mathcal{D}})$ if and only if $\mathcal{D}$ is geodesically invariant with respect to $\nabla^{\mathcal{G}}$ (see \cite{lewis1998affine}, Theorem 5.4). Using standard results on differential geometry, $\mathcal{D}$ is geodesically invariant with respect to $\nabla^{\mathcal{G}}$ if and only if the geodesic vector field associated with $\nabla^{\mathcal{G}}$ is tangent to $\mathcal{D}$.
\end{proof}

\begin{rem}
One important feature of the theory of virtual holonomic constraints presented in \cite{consolini2018coordinate} is that if the induced connection has the same trajectories as the Levi-Civita connection with respect to the induced metric on the constraint submanifold $\mathcal{C}\subseteq Q$, then the two connections are the same. However, its argument relies on the fact that the induced connection is symmetric. Therefore, the result does not follow in the nonholonomic case whenever the distribution is not integrable.
\end{rem}

The next example illustrates Proposition \ref{orthogonal:input:distribution}.

\begin{example}\label{Chaplygin:sleigh}
    Consider the Chaplygin sleigh, a celebrated example of a nonholonomic mechanical system evolving on the configuration manifold $SE(2)$ with Lagrangian function as in Example \ref{se2:example} but now we consider the control force
    $$F(x,y,\theta,\dot{x},\dot{y},\dot{\theta},u)=u(\sin \theta dx-\cos \theta dy).$$
    The corresponding controlled Lagrangian system is
    \begin{equation*}
        m\ddot{x}=u \sin\theta, \quad m\ddot{y}=-u \cos\theta, \quad I\ddot{\theta}=0.
    \end{equation*} The input distribution $\mathcal{F}$ is generated just by one vector field $$Y=\frac{\sin \theta}{m}\frac{\partial}{\partial x}-\frac{\cos \theta}{m}\frac{\partial}{\partial y},$$
    while the virtual nonholonomic constraint is the same distribution $\mathcal{D}$ as in Example \ref{se2:example}. We may check that the control law
    $$\hat{u}(x,y,\theta,\dot{x},\dot{y},\dot{\theta})=-m\dot{\theta}(\cos\theta \dot{x} +\sin \theta \dot{y})$$
    makes the distribution invariant under the closed-loop system. In addition, by Proposition \ref{orthogonal:input:distribution} the resulting system is precisely the nonholonomic equation \eqref{nonholonomic:mechanical:equation} for the Chaplygin system, since the input distribution spanned by $Y$ is orthogonal to the virtual nonholonomic constraints. \hfill $\diamond$
\end{example}

\begin{remark}
    There are plenty of ways to impose a virtual nonholonomic constraint on a mechanical control system in order to obtain a nonholonomic system. In the last example, one could choose the control force to be
    $$F(x,y,\theta,\dot{x},\dot{y},\dot{\theta},u_{1},u_{2})=u_{1} \sin \theta dx + u_{2}\cos \theta dy$$
    and the corresponding controlled Lagrangian system would be
    \begin{equation*}
        m\ddot{x}=u_{1} \sin\theta, \quad m\ddot{y}=u_{2} \cos\theta, \quad I\ddot{\theta}=0.
    \end{equation*}
    Then, the control law
    $$\hat{u}_{1}(x,y,\theta,\dot{x},\dot{y},\dot{\theta})=-m\dot{\theta}(\cos\theta \dot{x} +\sin \theta \dot{y}), \quad \hat{u}_{2} = -\hat{u}_{1}$$
    makes the the closed-loop system coincide again with the nonholonomic equations for the Chaplygin system. Note that the input distribution is now generated by the vector fields $Y^{1}=\frac{\sin \theta}{m}\frac{\partial}{\partial x}$ and $Y^{2} = \frac{\cos \theta}{m}\frac{\partial}{\partial y}.$ Since they do not generate a transversal distribution to $\mathcal{D}$, we should not expect the control law to be unique. \hfill $\diamond$
\end{remark}



\begin{remark}

    Under the conditions of Proposition \ref{orthogonal:input:distribution}, certain mechanical control systems may be driven to desired stable trajectories by imposing virtual nonholonomic constraints and using the proper control force.

    For instance, for the mechanical control system appearing in Example \ref{Chaplygin:sleigh}, we may drive the system to an asymptotically stable trajectory characterized by $\dot{\theta}=0$. Indeed, by defining the variables $v = \dot{x} \cos \theta + \dot{y} \sin \theta$ and $\omega=\dot{\theta}$, the equations of motion of the Chaplygin sleigh might be written as
    $$\dot{\omega}=-\frac{ma}{I+ma^{2}}v \omega, \quad \dot{v} = a \omega^{2},$$
    for which the points with $\omega=0$ are equilibria. Moreoever, from a stability analysis we deduce that the system exhibits asymptotic stability along 
    a trejectory defined by $\omega=0$.

    Nonholonomic systems may  exhibit a variety of long term behaviors. As discussed in e.g. \cite{ZBM98}
    one may have a stable (but not asymptotically stable) dynamics 
    or a mix of stable and asymptotically dynamics. Therefore, the applicability of our method is largely related to which kind of trajectories you wish to obtain. Thus, when we are given a mechanical control system satisfying the conditions of Proposition \ref{orthogonal:input:distribution}, we should first examine the qualitative properties of the associated nonholonomic system. Typical behaviour includes asymptotic stability, periodic or quasi-periodic orbits and conservation of first integrals such as the energy or the nonholonomic momentum. In a wide class of examples, virtual nonholonomic constraints enable us to use energy-momentum methods from \cite{ZBM98} to decide when it is possible to obtain stable or asymptotically stable trajectories. \hfill $\diamond$
\end{remark}

\section{Conclusions}\label{sec:conclusions}
We introduced virtual nonholonomic constraints for mechanical control systems evolving on differentiable manifolds by using an affine connection formalism. We have shown the existence and uniqueness of a control law allowing to define a virtual nonholonomic constraint and we have characterized the trajectories of the closed-loop system as solutions of a mechanical system associated with an induced constrained connection. In addition, we have characterized the dynamics of nonholonomic systems with linear constraints on the velocities in terms of virtual nonholonomic constraints.  In a future work, we would like to extend the results of this paper to nonlinear constraints in order to gain further insigth into the nonlinear nonholonomic virtual constraints defined in \cite{moran2021energy} and \cite{vcelikovsky2021virtual}. In this direction, it would be interesting to impose the energy of the mechanical system as the nonlinear virtual nonholonomic constraint and check if it is possible to design a control keeping the energy constant. Moreover, it would also be interesting to study conditions under which the closed-loop system obtained from Theorem \ref{main:theorem} is equivalent to a nonholonomic system in the same spirit of the approached followed in \cite{ricardo2010control}. Two control systems on a manifold $Q$ of the form
$$\dot{q}=G(q)+u_{a}Y^{a}(q),$$
where $G$ and $Y^{a}$ are vector fields on $Q$, are $S$-equivalent if there exists a diffeomorphism $\phi:Q\rightarrow Q$ such that both their drift vector fields $G$ and control vector fields $Y^{a}$ are $\phi$-related. Then, we may define a control system to be equivalent to a nonholonomic system if it is $S$-equivalent to a mechanical control system for which there exists a control law making its trajectories nonholonomic trajectories. Equivalence is a less restrictive condition than the relation with nonholonomic systems provided in this work. Hence, in principle, it is easier to impose a control law making a control system equivalent to a nonholonomic mechanical system. Though it is a weaker condition, equivalent systems still share the same qualitative behaviour such as stability properties, periodic orbits, etc.


\bibliography{ref}

\appendix
\section{Appendix. Christoffel symbols with constrained connection for Example \ref{disk:example}}\label{appendix} 
The following are the non-vanishing Christoffel symbols:
  \begin{equation*}
        \begin{split}
             \Gamma_{\varphi x}^{x} =& \frac{2Jm\sin\varphi\cos\varphi}{L}-\frac{(IJ+Jm\sin^{2}\varphi)L'}{L^{2}},\\
             \Gamma_{\varphi x}^{y} =& \frac{Jm(\sin^{2}\varphi-\cos^{2}\varphi)}{L}+\frac{J m \sin\varphi \cos\varphi L'}{L^{2}},\\
             \Gamma_{\varphi x}^{\theta} =& \frac{J m \sin\varphi}{L} + \frac{J m\cos\varphi L'}{L^2},\\
            \Gamma_{\varphi x}^{\varphi} =& \frac{m^2 (2 \sin\varphi \cos\varphi + \sin^{2}\varphi - \cos^{2}\varphi)}{L},\\
            & - \frac{m(I + m\sin^{2}\varphi - m \sin \varphi \cos\varphi) L'}{L^2},
                    \end{split}
    \end{equation*}
            \begin{equation*}
            \begin{split}
            \Gamma_{\varphi y}^{x} =& \frac{J m (\sin^{2}{\left(\phi \right)} - \cos^{2}{\left(\phi \right)})}{L} \\&+ \frac{\left(I - J m \sin{\left(\phi \right)} \cos{\left(\phi \right)}\right) L'}{ L^{2}},\\
            \Gamma_{\varphi y}^{y} =& - \frac{2 J m \sin{\left(\phi \right)} \cos{\left(\phi \right)}}{L}+ \frac{\left(- I + J m \cos^{2}{\left(\phi \right)}\right) L'}{L^{2}},\\
            \Gamma_{\varphi y}^{\theta} =& \frac{2 J m^{2} \sin^{2}{\left(\phi \right)} \cos{\left(\phi \right)}}{I L} - \frac{\left(- I m + J m^{2} \cos^{2}{\left(\phi \right)}\right) \cos{\left(\phi \right)}}{I L}\\
            &- \frac{\left(- I m + J m^{2} \cos^{2}{\left(\phi \right)}\right) L' \sin{\left(\phi \right)}}{I L^{2}}\\
            &+ \frac{\left(I m - J m^{2} \sin{\left(\phi \right)} \cos{\left(\phi \right)}\right) \sin{\left(\phi \right)}}{I L}\\
            &- \frac{\left(I m - J m^{2} \sin{\left(\phi \right)} \cos{\left(\phi \right)}\right) L' \cos{\left(\phi \right)}}{I L^{2}}\\
            &- \frac{\left(J m^{2} \sin^{2}{\left(\phi \right)} - J m^{2} \cos^{2}{\left(\phi \right)}\right) \cos{\left(\phi \right)}}{I L},\\
            \Gamma_{\varphi y}^{\varphi} =& \frac{\left(J m^{2} \cos^{2}{\left(\phi \right)} - J m^{2} \sin{\left(\phi \right)} \cos{\left(\phi \right)}\right) L'}{J L^{2}}\\ 
            &+ \frac{m^{2}( \sin^{2}{\left(\phi \right)} - \cos^{2}{\left(\phi \right)} - 2 \sin{\left(\phi \right)} \cos{\left(\phi \right)})}{L},\\
               \Gamma_{\varphi \theta}^{x} = &\frac{I J \sin{\left(\phi \right)} - I \cos{\left(\phi \right)}}{ L} + \frac{\left(- I J \cos{\left(\phi \right)} - I \sin{\left(\phi \right)}\right) L'}{L^{2}},\\
             \Gamma_{\varphi \theta}^{y} =& \frac{I \cos{\left(\phi \right)}}{L} + \frac{I L' \sin{\left(\phi \right)}}{L^{2}},\\
                  \Gamma_{\varphi \theta}^{\theta} =& - \frac{(2+2J) m \sin{\left(\phi \right)} \cos{\left(\phi \right)}}{L} - \frac{m L' \sin^{2}{\left(\phi \right)}}{L^{2}}, \\
            & \quad + \frac{ m (\cos^{2}{\left(\phi \right)} - \sin^{2}{\left(\phi \right)})}{L} \\&+ \frac{\left( J m \cos{\left(\phi \right)} + m \sin{\left(\phi \right)}\right) L' \cos{\left(\phi \right)}}{L^{2}},\\
              \Gamma_{\varphi \theta}^{\varphi} =& \frac{I m \cos{\left(\phi \right)}}{J L} + \frac{I m L' \sin{\left(\phi \right)}}{J L^{2}} + \frac{I J m \sin{\left(\phi \right)} - I m \cos{\left(\phi \right)}}{J L}\\  &+ \frac{\left(- I J m \cos{\left(\phi \right)} - I m \sin{\left(\phi \right)}\right) L'}{J L^{2}},\\
               \Gamma_{\varphi \varphi}^{x} =& - \frac{2 J \sin{\left(\phi \right)} \cos{\left(\phi \right)}}{L} + \frac{\left(- I J - J m \sin^{2}{\left(\phi \right)}\right) L'}{m L^{2}},\\
        \end{split}
    \end{equation*}
            \begin{equation*}
            \begin{split}
            & \Gamma_{\varphi \varphi}^{y} = \frac{J (\cos^{2}{\left(\phi \right)}-\sin^{2}{\left(\phi \right)})}{L} + \frac{J L' \sin{\left(\phi \right)} \cos{\left(\phi \right)}}{L^{2}},\\
            & \Gamma_{\varphi \varphi}^{\theta} = \frac{J m \sin^{3}{\left(\phi \right)}}{I L} - \frac{J m L' \sin^{2}{\left(\phi \right)} \cos{\left(\phi \right)}}{I L^{2}},\\
            & \quad + \frac{\left(- I J - J m \sin^{2}{\left(\phi \right)}\right) \sin{\left(\phi \right)}}{I L} - \frac{\left(- I J - J m \sin^{2}{\left(\phi \right)}\right) L' \cos{\left(\phi \right)}}{I L^{2}},\\
            & \Gamma_{\varphi \varphi}^{\varphi} = - \frac{m \sin^{2}{\left(\phi \right)}}{L} - \frac{2 m \sin{\left(\phi \right)} \cos{\left(\phi \right)}}{L} + \frac{m \cos^{2}{\left(\phi \right)}}{L},\\
            & \quad + \frac{m L' \sin{\left(\phi \right)} \cos{\left(\phi \right)}}{L^{2}} + \frac{\left(- I J - J m \sin^{2}{\left(\phi \right)}\right) L'}{J L^{2}}.
        \end{split}
    \end{equation*}
 
\clearpage 

\end{document}